\newtheorem{thm}{Theorem}
\newtheorem{lem}{Lemma}
\newtheorem{prop}{Proposition}
\newtheorem{coro}{Corollary}
\theoremstyle{definition}
\newtheorem{defn}{Definition}
\theoremstyle{remark}
\newcommand{\set}[1]{\left\{#1\right\}}
\newcommand{\card}[1]{\left|#1\right|}
\newcommand{\bigoh}[1]{\mathcal{O}\left(#1\right)}
\newcommand{\epsagreeformula}[3]{\frac{\card{N_{#1}(#2) \Delta N_{#1}(#3)}}{\max\set{\card{N_{#1}(#2)},\card{N_{#1}(#3)}}}}
\newcommand{\nonagreement}[3]{\textsc{NonAgreement}_{#1}\left(#2,#3\right)}
\newcommand{\R}{\mathbb{R}}
\newcommand{\sign}{\textsc{Sign}\xspace}
\newcommand{\agreecnt}{\textsc{AgreeCnt}\xspace}
\newcommand{\epsagree}{$\varepsilon$-agreement\xspace}
\newcommand{\unprocessed}{\textsc{Unprocessed}\xspace}
\newcommand{\copycand}{\textsc{CopyCandidate}\xspace}
\newcommand{\mergecand}{\textsc{MergeCandidate}\xspace}
\newcommand{\splitcand}{\textsc{SplitCandidate}\xspace}
\DeclareMathOperator{\cost}{cost}
\DeclareMathOperator{\op}{\textsc{Op}}
\title{Online Correlation Clustering for Dynamic Complete Signed Graphs}
\author{Ali Shakiba}
\date{\footnotesize{Department of Computer Science, Vali-e-Asr University of Rafsanjan, Rafsanjan, Iran. \\ \footnotesize{\texttt{ali.shakiba@vru.ac.ir;a.shakiba.iran@gmail.com}}}}
\begin{document}
	\maketitle
	
	\begin{abstract}
		In the correlation clustering problem for complete signed graphs, the input is a complete signed graph with edges weighted as $+1$ (denote recommendation to put this pair in the same cluster) or $-1$ (recommending to put this pair of vertices in separate clusters) and the target is to cluster the set of vertices such that the number of disagreements with these recommendations is minimized.

In this paper, we consider the problem of correlation clustering for dynamic complete signed graphs where (1) a vertex can be added or deleted, and (2) the sign of an edge can be flipped. In the proposed online scheme, the offline approximation algorithm in \cite{clmnpt21-correlation-clustering-in-constant-many-parallel-rounds} for correlation clustering is used. Up to the author's knowledge, this is the first online algorithm for dynamic graphs which allows a full set of graph editing operations. 

The proposed approach is rigorously analyzed and compared with a baseline method, which runs the original offline algorithm on each time step. Our results show that the dynamic operations have local effects on the neighboring vertices and we employ this locality to reduce the dependency of the running time in the Baseline to the summation of the degree of all vertices in $G_t$, the graph after applying the graph edit operation at time step $t$, to the summation of the degree of the changing vertices (e.g. two endpoints of an edge) and the number of clusters in the previous time step. Moreover, the required working memory is reduced to the square of the summation of the degree of the modified edge endpoints rather than the total number of vertices in the graph. 

\noindent\textbf{Keywords:} Correlation Clustering, Online Algorithms, Signed Graph

	\end{abstract}
	\section{Introduction}
		
The goal of clustering, as a long-standing problem in machine learning, is to divide a set of elements such that in each partition or cluster, the elements are similar to each other and dissimilar to the elements in other clusters. The problem of correlation clustering is proposed by Bansal, Blum, and Chawla in \cite{bbc04-correlation-clustering} and found many applications in disambiguation tasks \cite{k08}, community detection \cite{c12}, cluster ensembles \cite{b13}, due to its natural simple formulation. In the original correlation clustering formulation, the input is a weighted graph with positive and negative edge weights. The negative edge weights are used to represent dissimilarity between vertices whereas the positive edge weights are used to represent the similarity between two vertices. Then, correlation clustering asks for clustering this graph such that the sum of the absolute values of the weights of the negative edges inside any cluster plus the sum of the weight of the positively weighted edges between any clusters is minimized. The complement to this minimization, one may ask for maximum agreement, which we refer to as the maximization instance of the correlation clustering. Unfortunately, this problem in both maximization and minimization variants is known to be $\mathbb{NP}$-hard. A simple variant of this problem is well-studied in the literature: the weights of the edges are either $+1$ or $-1$, i.e. the underlying graph is signed. The maximization instance of this variant is shown to have a polynomial-time approximation scheme in \cite{bbc04-correlation-clustering}. Moreover, in \cite{c15}, a $2.06$-approximation algorithm is given for the minimization version of this variant. In \cite{gg05}, an approximation algorithm with polynomial running time is given when the number of clusters is bounded to be no more than $k$ clusters. There is also an $\bigoh{\log n}$-approximation for the minimization instance of the problem where the weights are arbitrary and $n$ is the number of vertices, due to \cite{d06}. Also, there is a constant factor approximation scheme for solving the minimization instance of this problem in \cite{clmnpt21-correlation-clustering-in-constant-many-parallel-rounds} when the edge weights are restricted to $\set{-1,+1}$ and the underlying graph is complete. 

One main restriction to most of the previously stated algorithms is their reliance on Integer Programming techniques. This makes these techniques less attractive for dynamic graphs. In \cite{clmnpt21-correlation-clustering-in-constant-many-parallel-rounds}, a $\bigoh{1}$-factor approximation algorithm for solving the correlation minimization variant in signed graphs in a semi-streaming setting is proposed which has $\bigoh{1}$ passes over the dataset. Moreover, the same algorithm is then applied in \cite{clmp22-online-consistent-correlation-clustering} in an online setting to dynamic signed graphs for complete graphs. In both the aforementioned methods, adding a vertex that reveals all its incident edges to the existing vertices is the only allowed method and after that, no edge might change its sign or a vertex might be deleted. In this paper, we propose the first algorithm, up to the author's knowledge, for the correlation clustering problem on signed complete graphs where a full set of dynamic operations are allowed. 

\subsection{Our Contribution}
	Our main contribution is analyzing the behavior of the \textsc{CorrelationClustering} algorithm of \cite{clmnpt21-correlation-clustering-in-constant-many-parallel-rounds} in a full dynamic graph setting where one can add/remove vertices and flip edge signs. This is not the first extension of the \textsc{CorrelationClustering} algorithm to dynamics graphs, as up to the author's knowledge, it is extended to accomplish consistent clustering for dynamic graphs in \cite{clmp22-online-consistent-correlation-clustering}. However, the extension in \cite{clmp22-online-consistent-correlation-clustering} only allows adding vertices that reveal all of its edges to existing vertices, whereas in our setting, any change is possible. The main idea in our algorithm is tracking the changes to the underlying graph, which as we will show, is local. Our most important result is the following theorem.
	\begin{thm}
		\label{thm:result}
		The proposed \textsc{OnlineCorrelationClustering} algorithm (Algorithm \ref{alg:online:algorithm}) maintains the same clustering as the clustering by the offline \textsc{CorrelationClustering} algorithm (Algorithm \ref{alg:correlation:clustering} and described as \textsc{Baseline} in Section \ref{sec:proposed}) on each time step $t=1,2,\ldots$ with the following time and space complexities:
		\begin{itemize}
			\item Adding and deleting vertices with all negative signed incident edges can be accomplished in constant time with constant extra memory.
			\item Flipping the sign of an edge $e=\set{u,v}$ requires $$\bigoh{\card{V_t}+\card{E^+_t}+\left(\deg_{G^+_t}(u) + \deg_{G^+_t}(v)\right)^2+\card{\mathcal{C}_{t-1}}},$$ time with $\bigoh{\left(\deg_{G^+_t}(u) + \deg_{G^+_t}(v)\right)^2}$ extra working memory. 
		\end{itemize}
		Note that computing the initial offline clustering for the initial graph $G_0$ requires $\bigoh{\card{V_0} + \card{E^+_0} + D_0}$ time where $D_0=\sum_{e=\set{u,v}\in E^+_0}\left(\deg_{G^+_0}(u) + \deg_{G^+_0}(v)\right)$ time and $\bigoh{\card{V_0}}$ offline memory. 
	\end{thm}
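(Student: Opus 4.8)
The plan is to prove the two assertions — that the online algorithm reproduces the offline clustering and that it meets the stated bounds — together by induction on the time step $t$. The inductive invariant is that after step $t-1$ the data structures hold exactly the output of the offline \textsc{CorrelationClustering} algorithm on $G_{t-1}$, together with the auxiliary information that algorithm computes: the closed positive neighborhoods $N_{G^+_{t-1}}(\cdot)$, for every positive edge $\set{x,y}$ the value $\card{N_{G^+_{t-1}}(x)\,\triangle\,N_{G^+_{t-1}}(y)}$ and the induced $\varepsilon$-agreement bit, the per-vertex count of in-agreement positive neighbors, the \islight flags, and the cluster collection $\mathcal{C}_{t-1}$. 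Because the offline algorithm of \cite{clmnpt21-correlation-clustering-in-constant-many-parallel-rounds} is randomized, I first fix a deterministic instantiation of it (fixed parameters $\varepsilon,\lambda$ and, say, processing candidate cluster centers in increasing index order), so that ``the offline clustering on $G_t$'' is a well-defined function of $G_t$; the online algorithm adopts the same rule, which is what makes ``maintains the same clustering'' meaningful. The base case is a single careful run of this offline algorithm on $G_0$: reading the graph and building all $N_{G^+_0}(v)$ costs $\bigoh{\card{V_0}+\card{E^+_0}}$; computing $\card{N_{G^+_0}(u)\,\triangle\,N_{G^+_0}(v)}$ for a positive edge $\set{u,v}$ costs $\bigoh{\deg_{G^+_0}(u)+\deg_{G^+_0}(v)}$, hence $\bigoh{D_0}$ in total; the \islight flags and the greedy clustering then take time linear in $\card{V_0}+\card{E^+_0}$; and the only memory that must persist is the $\bigoh{\card{V_0}}$ cluster labelling.

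For a vertex $w$ that is added or deleted with all incident edges negative, the key observation is that $w$ is isolated in $G^+$: its closed positive neighborhood is $\set{w}$ and $w$ appears in no other vertex's positive neighborhood. Hence no other neighborhood, symmetric-difference value, agreement bit, in-agreement count, or \islight flag changes, and the greedy step puts $w$ into its own singleton cluster (it is heavy, or light only vacuously and absorbed by no one). Therefore the offline clustering on the new graph is the old one with the singleton $\set{w}$ inserted or removed, an edit the online algorithm performs in $\bigoh{1}$ time with $\bigoh{1}$ extra memory; correctness is immediate from the invariant.

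The substance is the edge flip $e=\set{u,v}$, which I would organize around a chain of locality lemmas. Flipping $e$ alters only $N_{G^+}(u)$ and $N_{G^+}(v)$, each by one vertex; consequently the only positive edges whose agreement bit can change are those incident to $u$ or $v$, and since a vertex $w$'s \islight status is a function of $N_{G^+}(w)$ and $\set{N_{G^+}(x):x\in N_{G^+}(w)}$, the only flags that can change belong to $A:=\set{u,v}\cup N_{G^+_t}(u)\cup N_{G^+_t}(v)$, so $\card{A}=\bigoh{\deg_{G^+_t}(u)+\deg_{G^+_t}(v)}$. Refreshing the scaffolding then amounts to recomputing $\card{N_{G^+_t}(u)\,\triangle\,N_{G^+_t}(w)}$ (and the analogous quantity for $v$) together with its agreement bit for each of the $\bigoh{\deg_{G^+_t}(u)+\deg_{G^+_t}(v)}$ affected incident edges, each in time linear in the two incident neighborhood sizes, which sums to $\bigoh{\big(\deg_{G^+_t}(u)+\deg_{G^+_t}(v)\big)^2+\sum_{w\in N_{G^+_t}(u)\cup N_{G^+_t}(v)}\deg_{G^+_t}(w)}=\bigoh{\big(\deg_{G^+_t}(u)+\deg_{G^+_t}(v)\big)^2+\card{E^+_t}}$; and then re-deriving the \islight flags of the $\card{A}$ vertices from the updated counts in $\bigoh{\card{A}}$ time. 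The scratch space for this — essentially the agreement relation restricted to $A$, which also drives the local re-clustering below — is $\bigoh{(\deg_{G^+_t}(u)+\deg_{G^+_t}(v))^2}$, matching the claimed working memory.

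It remains to re-derive the clustering, and this is \emph{the main obstacle}: the greedy center selection is sequential and global, so I must argue that the clustering is determined locally enough that only clusters near $\set{u,v}$ can change. The plan is to invoke the structural properties of the offline algorithm established in \cite{clmnpt21-correlation-clustering-in-constant-many-parallel-rounds} — every cluster is contained in the closed positive neighborhood of its center, and every center is light and in $\varepsilon$-agreement with the vertices it absorbs — to show that, when the set of vertices whose \islight flag changed is contained in $A$, every cluster of $\mathcal{C}_{t-1}$ disjoint from a bounded-hop ball around $\set{u,v}$ is reproduced verbatim on $G_t$, while every cluster that does change, together with all its vertices, stays inside that ball, whose size is $\bigoh{\card{V_t}+\card{E^+_t}}$. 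Recomputing the greedy restricted to this region, inserting and deleting the $\bigoh{\deg_{G^+_t}(u)+\deg_{G^+_t}(v)}$ affected centers, and reconciling the outcome against the stored $\mathcal{C}_{t-1}$ contribute a further $\bigoh{\card{\mathcal{C}_{t-1}}}$, which yields the stated total time. The delicate point, which I expect to absorb most of the effort, is ruling out a cascade of center changes that creeps outward hop by hop; I would handle it by showing that a center whose cluster changes must itself lie in $A$ or absorb a vertex of $A$, and that absorption only reaches an in-agreement neighbor whose neighborhood $\varepsilon$-matches the center's, so after $\bigoh{1/\varepsilon}$ links the chain of affected centers cannot escape a bounded neighborhood of $\set{u,v}$.
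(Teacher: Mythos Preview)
Your treatment of the base case and of vertex addition/deletion matches the paper, and your locality analysis for the graph-maintenance side of \textsc{FlipSign} (only agreement bits of edges incident to $u,v$ and only \islight flags of vertices in $S=\set{u,v}\cup N_{G^+}(u)\cup N_{G^+}(v)$ can change) is exactly the content of the paper's Propositions~\ref{prop:eps:agreement:status:n:to:p:rest}, \ref{prop:eps:agreement:status:p:to:n:rest}, Corollary~\ref{cor:agreecount:not:changing:non:neighbors} and Theorem~\ref{thm:lightness:edge:flip}.

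Where your proposal goes astray is the cluster-maintenance step. You model the offline algorithm as a randomized greedy center-selection procedure and therefore identify as ``the main obstacle'' a potential outward cascade of center reassignments that you propose to bound via $\varepsilon$-agreement chains of length $\bigoh{1/\varepsilon}$. But the offline algorithm used in \emph{this} paper (Algorithm~\ref{alg:correlation:clustering}) is not that algorithm: after sparsification it simply returns the \emph{connected components} of $\widetilde{G^+}$. There are no centers, no greedy pass, and nothing randomized to derandomize; the clustering is a purely combinatorial function of the sparsified graph. Consequently your entire ``main obstacle'' is fictitious, and the argument you sketch to overcome it (absorption chains, $\bigoh{1/\varepsilon}$-hop containment) is neither needed nor obviously correct for what the paper actually claims.

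Once the clustering is recognized as connected components of $\widetilde{G^+}$, the paper's route is direct: since only edges with at least one endpoint in $S$ can appear or disappear in $\widetilde{G^+}$ (Lemma~\ref{lem:flip:sign:change:sparsified:graph}), any old cluster $C\in\mathcal{C}_{t-1}$ with $C\cap S=\emptyset$ survives verbatim (Lemma~\ref{lem:flip:sign:prune:state:space}); for the remaining clusters one computes the connected components of $\widetilde{G^+_t}[S]$, compares them pairwise against the old cluster labels on $S$ to decide copy/merge/split, and for a split candidate checks connectivity through $V\setminus S$ by a single connected-components pass on the relevant induced subgraph. The $\card{\mathcal{C}_{t-1}}$ term comes from the initial marking sweep, the $(\deg_{G^+_t}(u)+\deg_{G^+_t}(v))^2$ term from the $\card{S}\times\card{S}$ pairwise comparison and from storing the cluster-group pairs, and the $\card{V_t}+\card{E^+_t}$ term from the connected-components passes. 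Your proposal would need to be rewritten around this connected-components viewpoint; the induction framework and the graph-maintenance part can stay as they are.
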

	Up to the author's knowledge, this is the first online correlation clustering algorithm for signed graphs where edge modification in terms of its sign and vertex addition and deletion are allowed.
	In comparison with the \textsc{Baseline} algorithm, which requires $\bigoh{\card{V_t} + \card{E^+_t} + D_t}$ time where $$D_t=\sum_{e=\set{u,v}\in E^+_t}\left(\deg_{G^+_t}(u) + \deg_{G^+_t}(v)\right),$$ and $\bigoh{\card{V_t}}$ memory on each time step, our proposed approach has the following advantageous:
	\begin{itemize}
		\item Using the locality of changes made during the sequence of operations, we reduced the dependency of the running time in \textsc{Baseline} to the summation of the degree of all vertices in $G_t$ to the summation of the degree of the changing vertices (e.g. two endpoints of an edge) and the number of clusters in the previous time step. Note that it is expected that the number of clusters at any point in time is much smaller than the number of vertices of the graph. 
		\item Again, using the locality effect, we were able to reduce the required working memory to $$\bigoh{\left(\deg_{G^+_t}(u) + \deg_{G^+_t}(v)\right)^2},$$ compared with $\bigoh{\card{V_t}}$. Note that for real-world data, we expect very low degrees compared to the number of vertices, although $\deg_{G^+_t}(v) = \bigoh{\card{V_t}}$. 
	\end{itemize}

	\section{Preliminaries}\label{sec:pre}
		
In the min-disagree variant of the correlation clustering problem, we are given a signed undirected graph $G=(V,E^+,E^-)$ and the goal is finding a partitioning of the vertices $V$ into $\mathcal{C}=\set{C_1,\ldots,C_\ell}$ which minimizes 
\begin{equation}
	\label{eq:min:disagree:corr:clustering}
	\cost(\mathcal{C}) = \sum_{\substack{\set{u,v}\in E^+\\u\in C_i,v\in C_j,i\neq j}} 1 + \sum_{\substack{\set{u,v}\in E^-\\u,v\in C_i}} 1,
\end{equation}
where $E^+$ denotes the set of all edges in $G$ whose sign is $+$. We use $G^+=G[E^+]$ to denote the induced subgraph of $G$ with the same set of vertices and $E^+$ as the set of edges, i.e. $G^+=(V,E^+)$.

\begin{defn}[$\varepsilon$-agreement \cite{clmnpt21-correlation-clustering-in-constant-many-parallel-rounds}]
	\label{def:eps:agree}
	Let $G^+=G[E^+]$, $\varepsilon \in \R^+$, and $u,v\in V$. Then, vertices $u$ and $v$ are in $\varepsilon$-agreement if $u$ and $v$ are adjacent in $G^+$ and
	\begin{equation}
		\label{eq:eps:agreement}
		\nonagreement{G}{u}{v} = \epsagreeformula{G}{u}{v} < \varepsilon.
	\end{equation}
\end{defn}

\begin{defn}[$\varepsilon$-lightness \cite{clmnpt21-correlation-clustering-in-constant-many-parallel-rounds}]
	Let $G^+=G[E^+]$, $\varepsilon \in \R^+$, and $u\in V$. Then, the vertex $u$ is said to be $\varepsilon$-light, or simply light, in $G^+$ if 
	\begin{equation}
		\label{eq:lightness}
		\frac{\agreecnt_{G^+}(u)}{\card{N_{G^+}(u)}} < \varepsilon,
	\end{equation}
	where $\agreecnt_{G^+}(u)=\card{\set{w\in V|u \text{ and } v \text{ are in } \varepsilon\text{-agreement}}}$. Otherwise, it is called $\varepsilon$-heavy, or simply heavy. 
\end{defn}

The basic ingredient of our approach is the \textsc{CorrelationClustering} algorithm introduced in \cite{clmnpt21-correlation-clustering-in-constant-many-parallel-rounds}, which is restated in Algorithm \ref{alg:correlation:clustering}. This algorithm is a constant-factor approximation scheme for solving the min-disagree correlation clustering problem for a complete undirected signed graph. Let $\mathcal{C}_{\textsc{Opt}}$ be an optimal solution with cost $\textsc{Opt}_G$ (Eq. \eqref{eq:min:disagree:corr:clustering}) and $\mathcal{C}$ be the output of the Algorithm \ref{alg:correlation:clustering} with $G$ and $\varepsilon$ as its input. Then, $\cost(\mathcal{C}) \leq D + \textsc{Opt}_G$, where $D \leq 1+\frac{4}{\varepsilon}+\frac{1}{\varepsilon^2}$ (Theorem 3.9 in \cite{clmnpt21-correlation-clustering-in-constant-many-parallel-rounds}). In other words, \textsc{CorrelationClustering($G$)} is a $2+\frac{4}{\varepsilon}+\frac{1}{\varepsilon^2}$-approximation algorithm, or simple a constant factor approximation scheme. Note that in our presentation here, we substituted both $\beta$ and $\lambda$ with $\varepsilon$ and considered $i=1$ in comparison with \cite{clmnpt21-correlation-clustering-in-constant-many-parallel-rounds}. Moreover, \textsc{CorrelationClustering($G$)} is used in \cite{clmp22-online-consistent-correlation-clustering}, renamed to \textsc{Agreement} algorithm, for solving the min-disagree correlation clustering problem in an online setting. Their online approach only allows adding a vertex to the graph, which reveals all the edges to the previously arrived nodes upon arrival. Moreover, in \cite{clmp22-online-consistent-correlation-clustering}, they provide the \textsc{Online Agreement} algorithm which uses the \textsc{Correlation-Clustering} after each vertex arrival as an offline re-clustering phase. Their online approach is shown to be a constant-factor approximation algorithm, too (Theorem E.2 in \cite{clmp22-online-consistent-correlation-clustering}).
\begin{algorithm}
	\caption{\textsc{CorrelationClustering($G$)} \cite{clmnpt21-correlation-clustering-in-constant-many-parallel-rounds}}
	\label{alg:correlation:clustering}
	\begin{algorithmic}[1]
		\Procedure{CorrelationClustering}{$G,\varepsilon$}
			\State Let $G^+=G[E^+]$ where $E^+$ is the set of edges whose sign is $+$
			\State Discard all edges whose endpoints are not in \epsagree
			\State Discard all edges between two $\varepsilon$-light vertices
			\State Let $\widetilde{G^+}$ be the sparsified graph $G^+$ after performing previous two operations
			\State Let $\mathcal{C}$ be the collection of connected components in $\widetilde{G^+}$
			\State \Return $\mathcal{C}$ as the output clustering
		\EndProcedure
	\end{algorithmic}
\end{algorithm}
An efficient implementation of the algorithms in this paper is applicable using the adjacency list-representation of the graphs. Therefore, the neighborhood of each vertex is accessible in constant-time.

	\section{Proposed Method}\label{sec:proposed}
		
\paragraph{The Online Setting.} Consider an initial undirected complete signed graph $G_0=(V_0,E_0)$ and an infinite sequence of operations, one per time stamp $t=1,2,\ldots$ denoted as $\op_1,\op_2,\ldots$. We allow three kinds of operations: (1) $\textsc{FlipSign}(e)$ which flips the sign of an edge $e$, (2) $\textsc{AddVertex}(v)$ to add a new vertex $v$ to the graph whose all of its incident edges are negatively signed, and (3) $\textsc{DeleteVertex}(v)$ which deletes an existing vertex $v$ whose all of its incident edges are negatively signed. At first glance, the restriction on the vertex addition/deletion may seem restrictive, however, if one wants to delete a vertex with positively signed incident edges, it can use consecutive calls to $\textsc{FlipSign}(e)$ for all of its positively signed incident edges and finally, remove it. The same applies for the vertex addition.

\paragraph{Notation in Algorithms.} For every vertex $v$ in the graph $G$, we need three quantities: (1) $v.\textsc{IsLight}_{G^+}$ which is a boolean value denoting whether the vertex $v$ is light or not, (2) $v.\agreecnt_{G^+}$ as the number of vertices which are in $\varepsilon$-agreement with node $v$, and (3) $v.\textsc{Neigh}_{G^+}$ which is equal to $N_{G}(v)$. For each edge $e=\set{u,v}$, we also store its sign as $e.\sign$ (either $+$ or $-$), and $e.\textsc{Agree}$ which is a boolean equal to \textsc{True} if $u$ and $v$ are in $\varepsilon$-agreement in $G^+$, and \textsc{False} otherwise. Note that we do not need to store all of these values explicitly and we just introduced them for simplicity and clearance in expressing the algorithms.

\paragraph{High-level Description of the Algorithm.}The proposed online algorithm is fully described in Algorithm \ref{alg:online:algorithm}. At a high-level, we have three phases:
\begin{enumerate}
	\item \emph{Initial clustering}, where an initial graph $G_0$ is clustered into $\mathcal{C}_0$ by running the offline algorithm \textsc{CorrelationClustering}($G_0,\varepsilon$) for the given parameter $\varepsilon$. 
	\item During the \emph{graph maintenance} phase, each $\op_t$, for $t=1,2,\ldots$ is applied to $G_{t-1}$ and the graph $G_t$ is computed. This can be efficiently done by calling the corresponding algorithms, i.e. Algorithm \ref{alg:flip:sign} if $\op_t$ is flipping the sign of an edge, \textsc{AddVertex} and \textsc{DeleteVertex} if $\op_t$ is adding a new or deleting an existing vertex, respectively. 
	\item After the graph is updated, the clustering $\mathcal{C}_{t-1}$ would be refined to calculate $\mathcal{C}_t$. This phase is called \emph{cluster maintenance}. 
\end{enumerate}

\begin{algorithm}[ht]
	\caption{The \textsc{OnlineCorrelationClustering} algorithm}
	\label{alg:online:algorithm}
	\begin{algorithmic}[1]
		\Procedure{OnlineCorrelationClustering}{$G_0, \varepsilon$} 
		\State \textbf{Initialization}
		\State $\mathcal{C}_0 \gets \textsc{CorrelationClustering}(G_0, \varepsilon)$ \Comment{Algorithm \ref{alg:correlation:clustering}}
		\State \textbf{Run for each $\op_t$ for $t=1,2,\ldots$}
		\If{$\op_t$ is $\textsc{FlipSign}(e)$}
		\State Call $\textsc{FlipSign}(e)$ \Comment{Algorithm \ref{alg:flip:sign}}
		\ElsIf{$\op_t$ is $\textsc{AddVertex}(v)$}
		\State Call $\textsc{AddVertex}(v)$ \Comment{Algorithm described on Page \pageref{sec:vertex:addition:removal}}%
		\ElsIf{$\op_t$ is $\textsc{DeleteVertex}(v)$}
		\State Call $\textsc{DeleteVertex}(v)$ \Comment{Algorithm described on Page \pageref{sec:vertex:addition:removal}}%
		\Else
		\State \Return \textbf{Invalid Operation} \Comment{Ignoring $\op_t$ as it is invalid}
		\EndIf
		\EndProcedure
	\end{algorithmic}
\end{algorithm}

\paragraph{Time Complexity.} The initial clustering phase requires $\bigoh{\card{V_0} + \card{E^+_0} + D_0}$ time where $$D_0=\sum_{e=\set{u,v}\in E^+_0}\left(\deg_{G^+_0}(u) + \deg_{G^+_0}(v)\right),$$ (Theorem \ref{thm:offline:corr:clustering:complexity:time:space}). For each $\op_t$, for $t=1,2,\ldots$, the time complexity of the graph maintenance phase depends on the operation performed: (1) Adding/deleting a vertex by a call to \textsc{AddVertex}/\textsc{DeleteVertex} can be accomplished in time $\bigoh{1}$ (Theorem \ref{thm:result:vertex:add:del:complexity}). (2) Flipping the sign of a vertex by a call to $\textsc{FlipSign}(e=\set{u,v})$ is of time complexity $\bigoh{\card{V_t}+\card{E^+_t}+\left(\deg_{G^+_t}(u) + \deg_{G^+_t}(v)\right)^2}$ (Theorem \ref{thm:result:flip:sign:complexity}).

\paragraph{Space Complexity.} As the initial clustering is applied offline, its memory of $\bigoh{\card{V_t}}$ (Theorem \ref{thm:offline:corr:clustering:complexity:time:space}) does not count as the working memory required in the online phase. Combining Theorems \ref{thm:result:flip:sign:complexity} and \ref{thm:result:vertex:add:del:complexity}, the proposed algorithm requires constant extra memory to handle vertex addition and deletion, and $\bigoh{\left(\deg_{G^+_t}(u) + \deg_{G^+_t}(v)\right)^2}$ extra memory to handle edge sign flipping. 

\paragraph{Baseline Algorithm.}\label{baseline} The baseline algorithm, denoted as \textsc{Baseline}, is simply running the offline \textsc{CorrelationClustering}, Algorithm \ref{alg:correlation:clustering}, for every $G_t$ for $t=1,2,\ldots$. The time complexity of running \textsc{CorrelationClustering} for one time is as follows: (1) Constructing the graph $G^+$ can be accomplished in $\bigoh{\card{E^+}}$. (2) Discarding all edges of $G^+$ whose endpoints are not in $\varepsilon$-agreement can be done by $\bigoh{\card{E^+}}$ evaluation of $\textsc{NonAgreement}$ for two vertices. (3) Deciding on the lightness of vertices can be accomplished in $\bigoh{\card{V}}$ and the edges whose both endpoints are light can be discarded again in $\bigoh{\card{E^+}}$. (4) Finally, finding the connected components of the sparsified graph can be implemented in time $\bigoh{\card{V} + \card{E^+}}$. To sum it up, each call to $\textsc{CorrelationClustering}(G_t,\varepsilon)$ is of time $\bigoh{\card{V_t} + \card{E^+_t} + D_t}$ where $D_t=\sum_{e=\set{u,v}\in E^+}\left(\deg_{G^+_t}(u) + \deg_{G^+_t}(v)\right)$ and $G_t=(V_t,E_t)$. Moreover, the baseline algorithm requires $\bigoh{\card{V_t}}$ working memory in addition to the memory required for representing the graph $G_t$: Computing the $\nonagreement{G^+}{u}{v}$ for each edge $e=\set{u,v}\in E^+_t$ can be accomplished in $\bigoh{\deg_{G^+_t}(u)+\deg_{G^+_t}(v)}$ space, which is linear in the number of vertices in $G_t$ and can be reused for each edge. There is no need to store the \textsc{NonAgreement} of an edge, however, it is required to store the degree of each vertex $v$ in $G^+_t$ as well as the number of adjacent vertices in $G^+_t$ which are in \epsagree with $v$. Finally, representing the clustering $\mathcal{C}_t$ requires $\bigoh{\card{V_t}}$ space. 
\begin{thm}
	\label{thm:offline:corr:clustering:complexity:time:space}
	The offline $\textsc{CorrelationClustering}(G_t,\varepsilon)$, Algorithm \ref{alg:correlation:clustering}, is of time complexity $\bigoh{\card{V_t} + \card{E^+_t} + D}$ where $$D=\sum_{e=\set{u,v}\in E^+_t}\left(\deg_{G^+_t}(u) + \deg_{G^+_t}(v)\right),$$ and requires $\bigoh{\card{V_t}}$ memory plus the memory required to represent the graph $G_t$. 
\end{thm}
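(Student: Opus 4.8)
The plan is to walk through the six lines of Algorithm~\ref{alg:correlation:clustering} in order and charge the work of each line to one of the three terms $\card{V_t}$, $\card{E^+_t}$, or $D$, while keeping every auxiliary array at size $\bigoh{\card{V_t}}$ and reusing it from one edge (or vertex) to the next, so that nothing persists beyond the representation of $G_t$ except $\bigoh{\card{V_t}}$ of working space.

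\textbf{Setting up and line~3 (agreement filtering).} First I would obtain the adjacency-list representation of $G^+_t = G_t[E^+_t]$ and, in the same $\bigoh{\card{V_t}+\card{E^+_t}}$ scan, record $\deg_{G^+_t}(v)$ for every $v$ into an array of size $\bigoh{\card{V_t}}$. For line~3 I would, for each $e=\set{u,v}\in E^+_t$, compute $\card{N_{G^+_t}(u)\,\Delta\,N_{G^+_t}(v)}$ in time $\bigoh{\deg_{G^+_t}(u)+\deg_{G^+_t}(v)}$ using a single global boolean array indexed by $V_t$: mark the neighbors of $u$, walk the neighbors of $v$ to tally the symmetric difference, then unmark exactly the entries that were set, so the array is clean for the next edge. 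Dividing by $\max\set{\deg_{G^+_t}(u),\deg_{G^+_t}(v)}$ and comparing with $\varepsilon$ decides whether the endpoints are in $\varepsilon$-agreement, and at that moment, if they agree, I would increment running counters $\agreecnt_{G^+_t}(u)$ and $\agreecnt_{G^+_t}(v)$ held in an $\bigoh{\card{V_t}}$ array. Since $\sum_{e=\set{u,v}\in E^+_t}(\deg_{G^+_t}(u)+\deg_{G^+_t}(v)) = D$, this pass is $\bigoh{\card{V_t}+D}$.

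\textbf{Lines~4--7 (lightness filtering and connected components).} With $\agreecnt_{G^+_t}(\cdot)$ and the degrees available, line~4 decides $\textsc{IsLight}$ for each vertex in $\bigoh{1}$ time, hence $\bigoh{\card{V_t}}$ overall, stored in one more $\bigoh{\card{V_t}}$ array. The sparsified graph $\widetilde{G^+_t}$ keeps exactly those $e=\set{u,v}\in E^+_t$ that are in $\varepsilon$-agreement and have at least one $\varepsilon$-heavy endpoint; rather than materialize it, I would run a single BFS/DFS over $G^+_t$ that traverses an edge only when it passes both tests, the heaviness test being an $\bigoh{1}$ array lookup and the agreement test being re-evaluated in $\bigoh{\deg_{G^+_t}(u)+\deg_{G^+_t}(v)}$ exactly as before. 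This traversal costs $\bigoh{\card{V_t}+\card{E^+_t}+D}$ and uses only the $\bigoh{\card{V_t}}$ component-label array and the traversal stack/queue, and emitting $\mathcal{C}$ as that label array is $\bigoh{\card{V_t}}$. Summing the phases gives the claimed time $\bigoh{\card{V_t}+\card{E^+_t}+D}$ (the $\card{E^+_t}$ summand is retained only for readability, since $D\ge 2\card{E^+_t}$), and the only data living past the graph's representation are the $\bigoh{\card{V_t}}$-sized degree, $\agreecnt$, lightness, marker, and label arrays.

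\textbf{Main obstacle.} The delicate point is the symmetric-difference computation: it must run in $\bigoh{\deg_{G^+_t}(u)+\deg_{G^+_t}(v)}$ \emph{including} the restoration of the marker array, since naively zeroing a $\card{V_t}$-length array once per edge would inflate the bound to $\bigoh{\card{V_t}\cdot\card{E^+_t}}$; touching and resetting only the entries that were actually marked is what makes the per-edge cost telescope into $D$. A secondary subtlety, already absorbed above, is holding the working space to $\bigoh{\card{V_t}}$ rather than $\bigoh{\card{E^+_t}}$: one must not keep a per-edge $\textsc{Agree}$ flag but instead recompute $\varepsilon$-agreement during the connected-components pass, which is harmless because that recomputation is paid for within the $\bigoh{D}$ term already charged.
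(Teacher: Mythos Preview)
Your proof is correct and follows essentially the same line-by-line cost accounting as the paper's own argument (given in the \textsc{Baseline} paragraph immediately preceding the theorem): construct $G^+$, evaluate $\textsc{NonAgreement}$ once per positive edge at cost $\bigoh{\deg_{G^+_t}(u)+\deg_{G^+_t}(v)}$, decide lightness in $\bigoh{\card{V_t}}$, and run connected components. If anything, you are more careful than the paper about the $\bigoh{\card{V_t}}$ space bound---your explicit mark-and-unmark trick for the symmetric difference and your decision to recompute $\varepsilon$-agreement during the BFS/DFS rather than store a per-edge flag close a small gap the paper leaves implicit.
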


\paragraph{Analysis.}
Let $\mathcal{C}_t$ be the output clustering of the \textsc{Baseline} algorithm with $G_t$ as input. Also, let $\mathcal{C}^{'}_t$ be the clustering given by running \textsc{OnlineCorrelationClustering} after $\op_t$. We claim that $\mathcal{C}_t=\mathcal{C}^{'}_t$. As discussed in Sections \ref{sec:graph:maintenance:flip:sign} and \ref{sec:vertex:addition:removal}, the graph $G^+_t$ maintained by the procedures \textsc{FlipSign}, \textsc{AddVertex}, and \textsc{DeleteVertex} is the same as the graph $G^+_t$ computed in \textsc{Baseline} with the same values of $\varepsilon-\textsc{NonAgreement}$ and the same set of $\varepsilon$-lightness vertices. We have also discussed that the clustering maintenance methods described in Sections \ref{sec:clustering:maintenance} and \ref{sec:vertex:addition:removal} give the same output as the \textsc{Baseline} there. 

\subsection{Edge Sign Flip}\label{sec:edge:sign:flip}
	Let $e=\set{u,v}$ be an existing edge in the graph $G$. Then, the procedure $\textsc{FlipSign}(e)$, Algorithm \ref{alg:flip:sign}, changes the sign of the edge $e$ and maintains the clustering. Let use $H$ to denote the resulting graph after flipping the sign of the edge $e$. By combining Theorems \ref{thm:flip:edge:time:complexity:updating:g:plus} and \ref{thm:cluster:maintain:flip:sign}, we can state the following theorem.
	\begin{thm}
		\label{thm:result:flip:sign:complexity}
		The procedure $\textsc{FlipSign}(e=\set{u,v})$ stated in Algorithm \ref{alg:flip:sign} flips the sign of the edge $e$ and maintains the graph and the clustering with $$\bigoh{\card{V_t}+\card{E^+_t}+\left(\deg_{G^+_t}(u) + \deg_{G^+_t}(v)\right)^2},$$ time using $\bigoh{\left(\deg_{G^+_t}(u) + \deg_{G^+_t}(v)\right)^2}$ working memory in addition to the memory required for representing clustering output, e.g. $\mathcal{C}_t$.
	\end{thm}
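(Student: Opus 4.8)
The plan is to prove Theorem~\ref{thm:result:flip:sign:complexity} by simply composing the two component results that the excerpt promises, namely Theorem~\ref{thm:flip:edge:time:complexity:updating:g:plus} (which bounds the cost of the \emph{graph maintenance} step, i.e.\ updating $G^+_t$ together with all the $\varepsilon$-\textsc{NonAgreement} values and the $\varepsilon$-lightness flags) and Theorem~\ref{thm:cluster:maintain:flip:sign} (which bounds the cost of the \emph{cluster maintenance} step that turns $\mathcal{C}_{t-1}$ into $\mathcal{C}_t$). First I would recall precisely what $\textsc{FlipSign}(e)$ does in Algorithm~\ref{alg:flip:sign}: it changes $e.\sign$, updates the adjacency lists so that $v.\textsc{Neigh}_{G^+}$ and the degrees $\deg_{G^+}$ are correct for the two endpoints $u,v$, recomputes $e.\textsc{Agree}$ and the $\textsc{NonAgreement}$ values for exactly the edges whose symmetric-difference neighbourhood could have changed, updates $u.\textsc{IsLight}_{G^+}$, $v.\textsc{IsLight}_{G^+}$ and the lightness flags of their neighbours, and finally repairs the connected-component structure of $\widetilde{G^+}$. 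I would then invoke Theorem~\ref{thm:flip:edge:time:complexity:updating:g:plus} to charge the first part at $\bigoh{(\deg_{G^+_t}(u)+\deg_{G^+_t}(v))^2}$ time and working memory, and Theorem~\ref{thm:cluster:maintain:flip:sign} to charge the component-repair at $\bigoh{\card{V_t}+\card{E^+_t}+(\deg_{G^+_t}(u)+\deg_{G^+_t}(v))^2}$ time (the $\card{V_t}+\card{E^+_t}$ term coming from the worst case in which a cluster splits or merges and one must traverse the affected portion of $\widetilde{G^+}$).

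The substantive content of the proof is the \emph{locality} argument: I would argue that flipping the sign of $e=\{u,v\}$ can only change the $\varepsilon$-agreement status of an edge $\{x,y\}$ if $\{x,y\}$ is incident to $u$ or $v$, because $N_{G^+}(x)$ changes only for $x\in\{u,v\}$, hence $\epsagreeformula{G}{x}{y}$ is unchanged unless $x$ or $y$ lies in $\{u,v\}$. The set of such edges has size $\bigoh{\deg_{G^+_t}(u)+\deg_{G^+_t}(v)}$, and for each one the $\textsc{NonAgreement}$ recomputation costs $\bigoh{\deg_{G^+_t}(u)+\deg_{G^+_t}(v)}$ by intersecting two adjacency lists; this is where the square arises. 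Similarly only $u$, $v$, and their $G^+$-neighbours can change lightness, since a vertex's $\agreecnt$ changes only if one of its incident edges changed agreement status, so updating all lightness flags also costs $\bigoh{(\deg_{G^+_t}(u)+\deg_{G^+_t}(v))^2}$. I would then observe that the edges of $\widetilde{G^+}$ that appear or disappear are a subset of these $\bigoh{\deg_{G^+_t}(u)+\deg_{G^+_t}(v)}$ edges, so the combinatorial change to $\widetilde{G^+}$ is local even though locating the affected components and relabelling them may in the worst case touch $\bigoh{\card{V_t}+\card{E^+_t}}$ of the data structure.

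For correctness I would appeal to the analysis paragraph already in the excerpt: since $\textsc{FlipSign}$ maintains $G^+_t$, the $\textsc{NonAgreement}$ values, and the $\varepsilon$-lightness set identically to what \textsc{Baseline} would compute on $G_t$, and since the cluster-maintenance step (Theorem~\ref{thm:cluster:maintain:flip:sign}) produces exactly the connected components of the resulting $\widetilde{G^+}$, the output equals $\textsc{CorrelationClustering}(G_t,\varepsilon)$; this is the ``maintains the clustering'' claim. Adding the two time bounds and the two memory bounds, noting $\bigoh{(\deg_{G^+_t}(u)+\deg_{G^+_t}(v))^2}$ dominates the $\bigoh{\deg_{G^+_t}(u)+\deg_{G^+_t}(v)}$ terms, yields the stated $\bigoh{\card{V_t}+\card{E^+_t}+(\deg_{G^+_t}(u)+\deg_{G^+_t}(v))^2}$ time and $\bigoh{(\deg_{G^+_t}(u)+\deg_{G^+_t}(v))^2}$ working memory. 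The main obstacle I anticipate is the cluster-maintenance step hidden inside Theorem~\ref{thm:cluster:maintain:flip:sign}: one must carefully handle the cases where removing an edge from $\widetilde{G^+}$ disconnects a cluster (requiring a bounded-region search to decide whether the two sides are still connected via another path, and a component split if not) versus where adding an edge merges two clusters, and argue that in each case only the claimed resources are used and that no non-local component can change — the edge-incident locality above is what makes this go through, but writing it so that the split-check genuinely stays within the stated bound (rather than naively re-running a global connectivity computation) is the delicate part.
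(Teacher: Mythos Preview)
Your proposal is correct and follows essentially the same approach as the paper: the paper's proof of Theorem~\ref{thm:result:flip:sign:complexity} is literally the one-line remark ``By combining Theorems~\ref{thm:flip:edge:time:complexity:updating:g:plus} and~\ref{thm:cluster:maintain:flip:sign}, we can state the following theorem,'' and you do exactly this composition while additionally sketching the locality argument (only edges incident to $u$ or $v$ can change agreement status, hence only vertices in $N_{G^+}(u)\cup N_{G^+}(v)\cup\{u,v\}$ can change lightness) that the paper develops separately in Lemmas~\ref{lem:case:n:to:p:eps:agreement}, \ref{lem:case:p:to:n:eps:agreement}, Propositions~\ref{prop:eps:agreement:status:n:to:p:neigh}--\ref{prop:eps:agreement:status:p:to:n:rest}, Corollary~\ref{cor:agreecount:not:changing:non:neighbors} and Theorem~\ref{thm:lightness:edge:flip}. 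One small omission: Theorem~\ref{thm:cluster:maintain:flip:sign} actually carries an extra $\card{\mathcal{C}_{t-1}}$ term in its time bound, which the statement of Theorem~\ref{thm:result:flip:sign:complexity} silently absorbs into $\card{V_t}$ (since $\card{\mathcal{C}_{t-1}}\le\card{V_{t-1}}=\card{V_t}$ for a sign flip); you should mention this absorption explicitly when you sum the two bounds.
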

	\subsubsection{Flipping from $-$ to $+$}\label{sec:edge:sign:flip:n:to:p}
		Suppose that $\textsc{FlipSign}(e=\set{u,v})$ flips the sign of an edge $e$ from $-$ to $+$. As it is illustrated in Figure \ref{fig.signflip}, this change will have two different effects on the $\varepsilon$-agreement of neighboring vertices. If a vertex $w$ belongs to the neighborhood of both $u$ and $v$, the effect of edge sign flipping would be different compared to the case where $w$ belongs to either neighborhood and not the other. 
		\begin{figure}[ht]
			\centering
			\includegraphics[width=.9\textwidth]{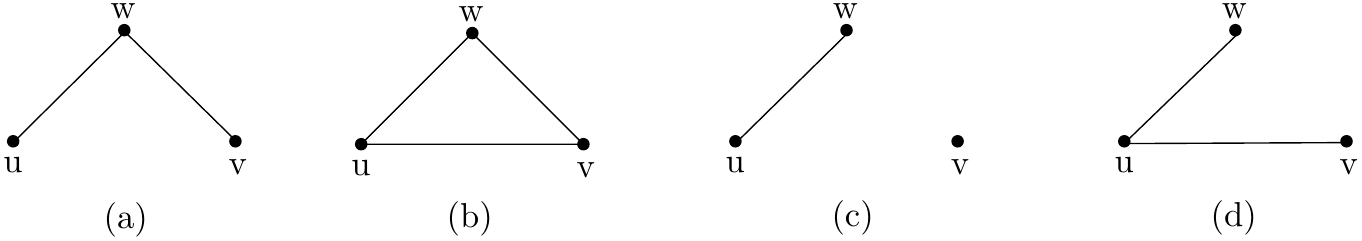}
			\caption{The effect of calling $\textsc{FlipSign}(e=\set{u,v})$ on the graph $G$ where the sign of edge $e$ is flipped from $-$ to $+$: (a) $G^+$ (b) $H^+$ (c) $G^+$ (d) $H^+$.}
			\label{fig.signflip}
		\end{figure}

		\begin{lem}
			\label{lem:case:n:to:p:eps:agreement}
			Suppose the sign of edge $u=\set{u,v}$ is changed from $-$ to $+$ by a single call to $\textsc{FlipSign}_G(e)$. Then, 
			\begin{enumerate}
				\item If $w \in N_{G^+}(u) \cap N_{G^+}(v)$, then $\nonagreement{G^+}{u}{w} \geq \nonagreement{H^+}{u}{w}$.
				\item If $w \in N_{G^+}(u) \setminus N_{G^+}(v)$, $\card{N_{G^+}(u)} < \card{N_{G^+}(w)}$, and $w \neq v$, then $\nonagreement{G^+}{u}{w} < \nonagreement{H^+}{u}{w}$.
				\item Let $\textsc{Threshold}=\card{N_{G^+}(u) \Delta N_{G^+}(w)}$. If $w \in N_{G^+}(u) \setminus N_{G^+}(v)$, $\card{N_{G^+}(u)} \geq \card{N_{G^+}(w)}$, and $w \neq v$. Then,
					\begin{enumerate}
						\item $\textsc{Threshold} < \card{N_{G^+}(u)}$ implies $\nonagreement{G^+}{u}{w} < \nonagreement{H^+}{u}{w}$.
						\item $\textsc{Threshold} = \card{N_{G^+}(u)}$ implies $\nonagreement{G^+}{u}{w} = \nonagreement{H^+}{u}{w}$. 
						\item $\textsc{Threshold} > \card{N_{G^+}(u)}$ implies $\nonagreement{G^+}{u}{w} > \nonagreement{H^+}{u}{w}$.
					\end{enumerate}
				\item Let $w \notin N_{G^+}(u) \cup N_{G^+}(v)$, then $\nonagreement{G^+}{x}{w} = \nonagreement{H^+}{x}{w}$ for $x \in N_{G^+}(w)$.
			\end{enumerate}
		\end{lem}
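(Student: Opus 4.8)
The plan is to reduce the whole statement to one structural observation followed by a short case analysis. Flipping the sign of $e=\set{u,v}$ from $-$ to $+$ changes the positive neighbourhood of exactly two vertices, $N_{H^+}(u)=N_{G^+}(u)\cup\set{v}$ and $N_{H^+}(v)=N_{G^+}(v)\cup\set{u}$, and leaves $N_{H^+}(x)=N_{G^+}(x)$ for every $x\notin\set{u,v}$; in particular $\card{N_{H^+}(u)}=\card{N_{G^+}(u)}+1$, since $v\notin N_{G^+}(u)$ because $e$ was negative in $G$. For each item I would write $\nonagreement{\cdot}{u}{w}$ as the ratio of a numerator $\card{N(u)\,\Delta\,N(w)}$ and a denominator $\max\set{\card{N(u)},\card{N(w)}}$, observe that in items 1--3 the hypothesis $w\neq v$ together with $w\in N_{G^+}(u)$ forces $w\notin\set{u,v}$ so that $N(w)$ is untouched, and then track how inserting the single element $v$ into $N_{G^+}(u)$ moves $v$ in or out of the symmetric difference.

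First I would dispose of item 1: if $w$ is a common neighbour then $v\in N_{G^+}(w)$, so inserting $v$ into $N_{G^+}(u)$ \emph{removes} $v$ from $N(u)\,\Delta\,N(w)$, dropping the numerator by exactly $1$, while the denominator can only increase (from $\max\set{\card{N_{G^+}(u)},\card{N_{G^+}(w)}}$ to $\max\set{\card{N_{G^+}(u)}+1,\card{N_{G^+}(w)}}$); since the original numerator is at least $1$, the ratio weakly decreases, which is the claim. For items 2 and 3 the hypothesis $w\notin N_{G^+}(v)$ gives $v\notin N_{G^+}(w)$, so now inserting $v$ into $N_{G^+}(u)$ \emph{adds} $v$ to the symmetric difference, making the new numerator $\textsc{Threshold}+1$ in the lemma's notation. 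The two items then differ only through the denominator: if $\card{N_{G^+}(u)}<\card{N_{G^+}(w)}$ then $\card{N_{G^+}(u)}+1\le\card{N_{G^+}(w)}$, so the denominator is unchanged and a strictly larger numerator over the same denominator gives item 2; if $\card{N_{G^+}(u)}\ge\card{N_{G^+}(w)}$ the denominator equals $\card{N_{G^+}(u)}$ before the flip and $\card{N_{G^+}(u)}+1$ after, so comparing $\nonagreement{G^+}{u}{w}$ with $\nonagreement{H^+}{u}{w}$ amounts to comparing $\frac{\textsc{Threshold}}{\card{N_{G^+}(u)}}$ with $\frac{\textsc{Threshold}+1}{\card{N_{G^+}(u)}+1}$.

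The only genuine computation is this last comparison, and it is a one-liner: cross-multiplying (legitimate because $\card{N_{G^+}(u)}>0$ as $w\in N_{G^+}(u)$) reduces $\frac{\textsc{Threshold}}{\card{N_{G^+}(u)}}$ versus $\frac{\textsc{Threshold}+1}{\card{N_{G^+}(u)}+1}$ to $\textsc{Threshold}$ versus $\card{N_{G^+}(u)}$, which is exactly the trichotomy 3(a)--3(c). Item 4 is essentially free: $w\notin N_{G^+}(u)\cup N_{G^+}(v)$ means $u,v\notin N_{G^+}(w)$, so every $x\in N_{G^+}(w)$ satisfies $x\notin\set{u,v}$, hence neither $N(x)$ nor $N(w)$ is altered by the flip and the two \textsc{NonAgreement} values are literally equal. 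The one point that requires care --- and where a careless argument would slip --- is the bookkeeping of the denominator's $\max$: one must check whether incrementing $\card{N_{G^+}(u)}$ changes which of the two cardinalities attains the maximum, and it is precisely this dichotomy that separates items 2 and 3; once that is pinned down, everything else is tracking a single vertex $v$ entering or leaving a symmetric difference.
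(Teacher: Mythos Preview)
Your proposal is correct and follows essentially the same route as the paper: both arguments reduce everything to the observation that only $N(u)$ and $N(v)$ change (each acquiring one new element), then track whether $v$ enters or leaves $N(u)\,\Delta\,N(w)$ and whether the $\max$ in the denominator shifts. The paper isolates the symmetric-difference bookkeeping into a separate preparatory lemma (Lemma~\ref{lem:case:n:to:p:delta}) and then cites it case by case, whereas you do the same bookkeeping inline; your cross-multiplication for item~3 is exactly the comparison the paper gestures at when it says ``both the numerator and the denominator increase by one.''
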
 %
		At first, it seems that maintaining the graph $G^t$ and the status of \epsagree after an edge sign flipping may require a large amount of processing. However, as we show in the following two propositions, flipping the sign of a vertex has local effects. 
		\begin{prop}
			\label{prop:eps:agreement:status:n:to:p:neigh}
			Suppose the sign of an edge $u=\set{u,v}$ is changed from $-$ to $+$ by a single call to $\textsc{FlipSign}_G(e)$ and $H$ is the resulting graph. Let $w \in N_{G+}(u)$ and $\textsc{Threshold}=\card{N_{G^+}(u) \Delta N_{G^+}(w)}$. %
			\begin{enumerate}
				\item If $u$ and $w$ are in $\varepsilon$-agreement in $G^+$, then they are in $\varepsilon$-agreement in $H^+$ if any of the following cases hold:
					\begin{enumerate}
						\item $w \in N_{G^+}(u) \cap N_{G^+}(v)$.
						\item $w \in N_{G^+}(u) \setminus N_{G^+}(v)$, $w \neq v$, $\card{N_{G^+}(u)} \geq \card{N_{G^+}(w)}$, and $\textsc{Threshold} \geq \card{N_{G^+}(u)}$.
					\end{enumerate}
				\item If $u$ and $w$ are not in $\varepsilon$-agreement in $G^+$, then they are not in $\varepsilon$-agreement in $H^+$ if any of the following cases hold:
					\begin{enumerate}
						\item $w \in N_{G^+}(u) \setminus N_{G^+}(v)$, $w \neq v$, and $\card{N_{G^+}(u)} < \card{N_{G^+}(w)}$.
						\item $w \in N_{G^+}(u) \setminus N_{G^+}(v)$, $w \neq v$, $\card{N_{G^+}(u)} \geq \card{N_{G^+}(w)}$, and $\textsc{Threshold} < \card{N_{G^+}(u)}$.
					\end{enumerate}
			\end{enumerate}
		\end{prop}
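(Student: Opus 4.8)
The plan is to derive the proposition directly from the definition of $\varepsilon$-agreement (Definition \ref{def:eps:agree}) together with the monotonicity relations of Lemma \ref{lem:case:n:to:p:eps:agreement}; no new combinatorics is needed. The preliminary observation I would record is that flipping the sign of $e=\set{u,v}$ leaves every edge $\set{u,w}$ with $w\neq v$ untouched, so if $w\in N_{G^+}(u)$ and $w\neq v$ then $u$ and $w$ remain adjacent in $H^+$; consequently the only way the $\varepsilon$-agreement status of the pair $\set{u,w}$ can change is through the value of $\nonagreement{\cdot}{u}{w}$. In case 1(a) the hypothesis $w\in N_{G^+}(u)\cap N_{G^+}(v)$ already forces $w\neq v$, since $v\notin N_{G^+}(v)$, so the same remark applies there.

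For Part 1, assume $u$ and $w$ are in $\varepsilon$-agreement in $G^+$, i.e.\ they are adjacent in $G^+$ and $\nonagreement{G^+}{u}{w}<\varepsilon$. In case 1(a), item 1 of Lemma \ref{lem:case:n:to:p:eps:agreement} gives $\nonagreement{H^+}{u}{w}\le\nonagreement{G^+}{u}{w}<\varepsilon$, and since $u,w$ remain adjacent in $H^+$ this is precisely $\varepsilon$-agreement in $H^+$. In case 1(b), the two hypotheses $\card{N_{G^+}(u)}\ge\card{N_{G^+}(w)}$ and $\textsc{Threshold}\ge\card{N_{G^+}(u)}$ split according to whether $\textsc{Threshold}=\card{N_{G^+}(u)}$ or $\textsc{Threshold}>\card{N_{G^+}(u)}$; by items 3(b) and 3(c) of the lemma these give $\nonagreement{H^+}{u}{w}=\nonagreement{G^+}{u}{w}$ and $\nonagreement{H^+}{u}{w}<\nonagreement{G^+}{u}{w}$ respectively, so in both subcases $\nonagreement{H^+}{u}{w}<\varepsilon$, and adjacency in $H^+$ (guaranteed by $w\neq v$) closes the case.

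For Part 2, assume $u$ and $w$ are \emph{not} in $\varepsilon$-agreement in $G^+$. Since $w\in N_{G^+}(u)$, the pair is adjacent in $G^+$, so the failure of agreement can only come from $\nonagreement{G^+}{u}{w}\ge\varepsilon$. In case 2(a), item 2 of the lemma gives $\nonagreement{H^+}{u}{w}>\nonagreement{G^+}{u}{w}\ge\varepsilon$; in case 2(b), item 3(a) gives the same strict inequality $\nonagreement{H^+}{u}{w}>\nonagreement{G^+}{u}{w}\ge\varepsilon$. Hence in either case $\nonagreement{H^+}{u}{w}\ge\varepsilon$, so $u$ and $w$ are not in $\varepsilon$-agreement in $H^+$ (they do in fact remain adjacent since $w\neq v$, but this is not needed).

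The whole argument is bookkeeping once Lemma \ref{lem:case:n:to:p:eps:agreement} is available, so I do not anticipate a genuine obstacle. The one spot that needs care is keeping the two directions of the agreement test straight: Part 1 needs the post-flip value $\nonagreement{H^+}{u}{w}$ to stay below the threshold $\varepsilon$, so I use the cases of the lemma where it decreases or is unchanged; Part 2 needs it to stay at or above $\varepsilon$, so I use the cases where it strictly increases. In particular one must check that the boundary value $\textsc{Threshold}=\card{N_{G^+}(u)}$ belongs to case 1(b) rather than case 2(b), which it does precisely because item 3(b) of the lemma yields equality of the two $\nonagreement$ values and equality preserves both ``$<\varepsilon$'' and ``$\ge\varepsilon$''.
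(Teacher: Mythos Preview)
Your proposal is correct and follows essentially the same route as the paper: both derive each subcase directly from the corresponding monotonicity clause of Lemma~\ref{lem:case:n:to:p:eps:agreement}, using clause~(1) for 1(a), clause~(3-c) (and you also, more carefully, clause~(3-b) for the equality boundary) for 1(b), clause~(2) for 2(a), and clause~(3-a) for 2(b). Your explicit check that adjacency of $u$ and $w$ is preserved in $H^+$ is a detail the paper leaves implicit.
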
 %
		\begin{prop}
			\label{prop:eps:agreement:status:n:to:p:rest}
			Suppose the sign of an edge $u=\set{u,v}$ is changed from $-$ to $+$ by a single call to $\textsc{FlipSign}_G(e)$. 
			Let $w \notin N_{G^+}(u) \cup N_{G^+}(v)$, then the $\varepsilon$-agreement of $x$ and $w$ is the same in both $G^+$ and $H^+$ for $x \in N_{G^+}(w)$.
		\end{prop}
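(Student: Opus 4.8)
The plan is to exploit the fact that flipping the sign of $e=\set{u,v}$ from $-$ to $+$ alters $G^+$ in only one way: it inserts the single edge $\set{u,v}$ (recall $\set{u,v}\notin E^+$ before the flip, since its sign was $-$). Hence $H^+$ differs from $G^+$ solely in that $N_{H^+}(u)=N_{G^+}(u)\cup\set{v}$ and $N_{H^+}(v)=N_{G^+}(v)\cup\set{u}$, while $N_{H^+}(z)=N_{G^+}(z)$ for every vertex $z\notin\set{u,v}$. Everything then follows by unwinding Definition \ref{def:eps:agree}.

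First I would fix $w\notin N_{G^+}(u)\cup N_{G^+}(v)$ and $x\in N_{G^+}(w)$; here I take $w\neq u,v$, the cases $w\in\set{u,v}$ being exactly what Proposition \ref{prop:eps:agreement:status:n:to:p:neigh} and Lemma \ref{lem:case:n:to:p:eps:agreement}(1)--(3) treat. The one small deduction to make is that $x\notin\set{u,v}$ as well: if $x=u$ then $u\in N_{G^+}(w)$, i.e.\ $w\in N_{G^+}(u)$, contradicting the hypothesis on $w$, and symmetrically $x=v$ would force $w\in N_{G^+}(v)$. So both $x$ and $w$ avoid $\set{u,v}$, and therefore $N_{H^+}(x)=N_{G^+}(x)$ and $N_{H^+}(w)=N_{G^+}(w)$ by the locality observation above.

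From these two identities I would draw the two facts needed. First, $x\in N_{G^+}(w)=N_{H^+}(w)$, so $x$ and $w$ are adjacent in $H^+$ precisely as they are in $G^+$; the adjacency clause of Definition \ref{def:eps:agree} is thus unaffected. Second, $N_{H^+}(x)\mathbin{\Delta}N_{H^+}(w)=N_{G^+}(x)\mathbin{\Delta}N_{G^+}(w)$ and $\max\set{\card{N_{H^+}(x)},\card{N_{H^+}(w)}}=\max\set{\card{N_{G^+}(x)},\card{N_{G^+}(w)}}$, whence $\nonagreement{H^+}{x}{w}=\nonagreement{G^+}{x}{w}$ --- this is exactly Lemma \ref{lem:case:n:to:p:eps:agreement}(4), which I can cite rather than reprove. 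Combining the preserved adjacency with the preserved \nonagreement{}{}{}\ value and comparing against the threshold $\varepsilon$ in Definition \ref{def:eps:agree} shows $x$ and $w$ are in \epsagree in $H^+$ if and only if they are in \epsagree in $G^+$, as claimed.

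I do not expect a genuine obstacle here: the statement is essentially bookkeeping on top of Lemma \ref{lem:case:n:to:p:eps:agreement}(4), and the only point requiring attention is the clean verification that $x\neq u,v$ (and the clarification that $w\neq u,v$ is intended). The same locality principle is what will justify that \textsc{FlipSign} only needs to re-examine pairs incident to $N_{G^+}(u)\cup N_{G^+}(v)$, underpinning the $\bigoh{\left(\deg_{G^+_t}(u)+\deg_{G^+_t}(v)\right)^2}$ bound.
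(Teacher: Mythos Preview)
Your proposal is correct and follows essentially the same route as the paper: the paper's proof is a one-line appeal to the definition of \epsagree together with case~(4) of Lemma~\ref{lem:case:n:to:p:eps:agreement}, and your argument unpacks exactly that, additionally making explicit the (useful) verification that $x\notin\set{u,v}$ and flagging that $w\notin\set{u,v}$ is the intended reading. Nothing further is needed.
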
 %
		Propositions \ref{prop:eps:agreement:status:n:to:p:neigh} and \ref{prop:eps:agreement:status:n:to:p:rest} and Lemma \ref{lem:case:n:to:p:eps:agreement} cover all possible configurations for $\varepsilon$-agreement of two vertices $u$ and $v$ whose positive sign edge $e=\set{u,v}$ is present in $G^+$. These results are summarized in Table \ref{tbl:edge:agreement:summary}. 
		\begin{table}[ht]
			\caption{All different configurations for $\varepsilon$-agreement of any two vertices whose edge is present in $H^+$ after $\textsc{FlipEdge}(u,v)$ is called. The marks $\checkmark$, and $\times$ are used to denote that the vertices in that row are in $\varepsilon$-agreement or not. The mark $?$ is used to denote that identifying $\varepsilon$-agreement requires re-computation of the Eq. \eqref{eq:eps:agreement}.}
			\label{tbl:edge:agreement:summary}
			\centering 
			\begin{tabular}{|c|c|c|c|l|}
				\hline
				\textbf{Edge} & \textbf{Neighborhood} & \textbf{$\varepsilon$-agreement $G^+$} & \textbf{$\varepsilon$-agreement $H^+$} & \textbf{Exact Conditions}  \\ \hline \hline 
				\multicolumn{5}{|c|}{Either flipping sign from $+$ to $-$ or $-$ to $+$} \\ \hline 
				$\set{x,w}$ & \multirow{2}{*}{$\substack{w \notin N_{G^+}(u) \cup N_{G^+}(v)\\w\notin \set{u,v}}$} & $\checkmark$ & $\checkmark$ & \multirow{2}{*}{Propositions \ref{prop:eps:agreement:status:n:to:p:rest} \& \ref{prop:eps:agreement:status:p:to:n:rest}} \\ %
				$\set{x,w}$ &  & $\times$ & $\times$ & \\ \hline \hline 
				\multicolumn{5}{|c|}{Flipping sign from $-$ to $+$} \\ \hline 
				$\set{u,w}$ & \multirow{2}{*}{$w\in N_{G^+}(u) \cap N_{G^+}(v)$} & $\checkmark$ & $\checkmark$ & Proposition \ref{prop:eps:agreement:status:n:to:p:neigh}, (1-a) \\ %
				$\set{u,w}$ &  & $\times$ & $?$ & Lemma \ref{lem:case:n:to:p:eps:agreement}, (1) \\ \hline %
				$\set{u,w}$ & \multirow{4}{*}{$w\in N_{G^+}(u) \setminus N_{G^+}(v)$} & $\checkmark$ & $\checkmark$ & Proposition \ref{prop:eps:agreement:status:n:to:p:neigh}, (1-b) \\ %
				$\set{u,w}$ &  & $\checkmark$ & $?$ & Lemma \ref{lem:case:n:to:p:eps:agreement}, (2,3-a) \\ %
				$\set{u,w}$ &  & $\times$ & $\times$ & Proposition \ref{prop:eps:agreement:status:n:to:p:neigh}, (2-a,2-b) \\ %
				$\set{u,w}$ &  & $\times$ & $?$ & Lemma \ref{lem:case:n:to:p:eps:agreement}, (3-c) \\ \hline 
				\hline 
				\multicolumn{5}{|c|}{Flipping sign from $+$ to $-$} \\ \hline 
				$\set{u,w}$ & \multirow{2}{*}{$w\in N_{G^+}(u) \cap N_{G^+}(v)$} & $\checkmark$ & $?$ & Lemma \ref{lem:case:p:to:n:eps:agreement}, (1) \\ %
				$\set{u,w}$ &  & $\times$ & $\times$ & Proposition \ref{prop:eps:agreement:status:n:to:p:neigh}, (2-a) \\ \hline %
				$\set{u,w}$ & \multirow{4}{*}{$w\in N_{G^+}(u) \setminus N_{G^+}(v)$} & $\checkmark$ & $\checkmark$ & Proposition \ref{prop:eps:agreement:status:p:to:n:neigh}, (1) \\ %
				$\set{u,w}$ &  & $\checkmark$ & $?$ & Lemma \ref{lem:case:p:to:n:eps:agreement}, (2,3-c) \\ %
				$\set{u,w}$ &  & $\times$ & $\times$ & Proposition \ref{prop:eps:agreement:status:p:to:n:neigh}, (2-b) \\ %
				$\set{u,w}$ &  & $\times$ & $?$ & Lemma \ref{lem:case:p:to:n:eps:agreement}, (3-a) \\ \hline 
			\end{tabular}
		\end{table}
	\subsubsection{Flipping from $+$ to $-$}\label{sec:edge:sign:flip:p:to:n}
		Similar to the flipping the sign of an edge from $-$ to $+$, we can obtain a similar set of results, however with different conditions. 
		\begin{lem}
			\label{lem:case:p:to:n:eps:agreement}
			Suppose the sign of an edge $u=\set{u,v}$ is changed from $+$ to $-$ by a single call to $\textsc{FlipSign}_G(e)$. Then,
			\begin{enumerate}
				\item If $w \in N_{G^+}(u) \cap N_{G^+}(v)$, then $\nonagreement{H^+}{u}{w} > \nonagreement{G^+}{u}{w}$.
				\item If $w \in N_{G^+}(u) \setminus N_{G^+}(v)$, $\card{N_{G^+}(u)} \leq \card{N_{G^+}(w)}$, and $w \neq v$, then $\nonagreement{H^+}{u}{w} < \nonagreement{G^+}{u}{w}$.
				\item Let $\textsc{Threshold}=\card{N_{G^+}(u) \Delta N_{G^+}(w)}$. If $w \in N_{G^+}(u) \setminus N_{G^+}(v)$, $\card{N_{G^+}(u)} > \card{N_{G^+}(w)}$, and $w \neq v$. Then,
				\begin{enumerate}
					\item $\textsc{Threshold} < \card{N_{G^+}(u)}$ implies $\nonagreement{G^+}{u}{w} > \nonagreement{H^+}{u}{w}$.
					\item $\textsc{Threshold} = \card{N_{G^+}(u)}$ implies $\nonagreement{G^+}{u}{w} = \nonagreement{H^+}{u}{w}$. 
					\item $\textsc{Threshold} > \card{N_{G^+}(u)}$ implies $\nonagreement{G^+}{u}{w} < \nonagreement{H^+}{u}{w}$.
				\end{enumerate}
				\item Let $w \notin N_{G^+}(u) \cup N_{G^+}(v)$, then $\nonagreement{G^+}{x}{w} = \nonagreement{H^+}{x}{w}$ for $x \in N_{G^+}(w)$.
			\end{enumerate}
		\end{lem}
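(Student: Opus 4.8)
The plan is to reduce every item to bookkeeping of how the numerator $|N_{G^+}(u)\,\Delta\,N_{G^+}(w)|$ and the denominator $\max\{|N_{G^+}(u)|,|N_{G^+}(w)|\}$ of $\nonagreement{G^+}{u}{w}$ change when we pass from $G^+$ to $H^+$; the argument parallels that of Lemma~\ref{lem:case:n:to:p:eps:agreement} with the direction of the change in $G^+$ reversed. First I would record the single structural effect of flipping $e=\set{u,v}$ from $+$ to $-$: the edge $\set{u,v}$ leaves $E^+$, so $N_{H^+}(u)=N_{G^+}(u)\setminus\set{v}$ and $N_{H^+}(v)=N_{G^+}(v)\setminus\set{u}$, whereas $N_{H^+}(z)=N_{G^+}(z)$ for every $z\notin\set{u,v}$; in particular $|N_{H^+}(u)|=|N_{G^+}(u)|-1$ and all other positive degrees are unchanged. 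Write $A=N_{G^+}(u)$ and $B=N_{G^+}(w)$; in items~1--3 the hypothesis $w\in N_{G^+}(u)$ forces $w\ne u$, $u\in B$, and (since $\set{u,v}\in E^+$) $v\in A$, while $B$ is untouched by the flip because $w\notin\set{u,v}$ (each item ensures $w\ne v$).

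For the numerator, the only element removed from $A$ is $v$, so $A\,\Delta\,B$ and $(A\setminus\set{v})\,\Delta\,B$ differ only in whether they contain $v$, which is decided by whether $v\in B$, i.e.\ whether $w\in N_{G^+}(v)$. In item~1 we have $w\in N_{G^+}(v)$, hence $v\in A\cap B$, so $v$ was absent from the symmetric difference and becomes present: the numerator increases by exactly $1$. In items~2 and~3 we have $w\notin N_{G^+}(v)$, hence $v\in A\setminus B$, so $v$ leaves the symmetric difference and the numerator decreases by exactly $1$ (this also shows $\textsc{Threshold}=|A\,\Delta\,B|\ge 1$). For the denominator I would split on the size comparison already supplied by each hypothesis: if $|A|\le|B|$ then $\max\{|A|,|B|\}=\max\{|A|-1,|B|\}=|B|$, so the denominator is unchanged; if $|A|>|B|$ then $|A|-1\ge|B|$, so $\max\{|A|,|B|\}=|A|$ drops to $\max\{|A|-1,|B|\}=|A|-1$. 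Combining: in item~1 the numerator rises and the denominator does not rise, so by monotonicity of $x\mapsto x/c$ and $c\mapsto a/c$ the ratio strictly increases; in item~2 the numerator falls and the denominator is unchanged, so the ratio strictly decreases; in item~3, writing $T=\textsc{Threshold}$ and $d=|A|$, the claim is the comparison of $T/d$ with $(T-1)/(d-1)$, and the identity $T/d-(T-1)/(d-1)=(d-T)/\bigl(d(d-1)\bigr)$ shows this difference has the sign of $|N_{G^+}(u)|-\textsc{Threshold}$, yielding subcases 3(a), 3(b), 3(c) in turn. Item~4 is immediate: if $w\notin N_{G^+}(u)\cup N_{G^+}(v)$ then $w\notin\set{u,v}$, and every $x\in N_{G^+}(w)$ satisfies $x\notin\set{u,v}$ (since $u,v\notin N_{G^+}(w)$), so $N_{G^+}(x)$ and $N_{G^+}(w)$ are untouched by the flip and $\nonagreement{G^+}{x}{w}=\nonagreement{H^+}{x}{w}$.

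The only genuinely delicate points — and the ``hard part'' such as it is — are checking that every fraction in sight is well defined, i.e.\ that the relevant maxima are positive ($|B|\ge 1$ because $u\in B$, and in item~3 $\max\{|A|-1,|B|\}=|A|-1\ge|B|\ge 1$ and $d=|A|\ge 2$), and keeping the direction of the inequality straight in item~3, where the sign of $T/d-(T-1)/(d-1)$ runs opposite to the naive reading of ``numerator down, denominator down.'' I would also note that by the symmetry $u\leftrightarrow v$ the analogous statement holds with $v$ in place of $u$, which is what the cluster-maintenance routine invokes when it updates the agreement status of the neighbors of $v$.
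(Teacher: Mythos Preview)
Your proof is correct and follows essentially the same approach as the paper: both arguments track how the numerator $|N_{G^+}(u)\,\Delta\,N_{G^+}(w)|$ and the denominator $\max\{|N_{G^+}(u)|,|N_{G^+}(w)|\}$ change under the flip (the paper outsources this bookkeeping to Lemma~\ref{lem:case:p:to:n:delta} and then largely defers to the $-$-to-$+$ case, whereas you carry it out directly and self-containedly). Your explicit identity $T/d-(T-1)/(d-1)=(d-T)/\bigl(d(d-1)\bigr)$ for item~3 and your checks that the denominators are positive are welcome additions that the paper leaves implicit.
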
 %
		As we will show in the following two propositions, the effect of edge sign flipping from $+$ to $-$ is local similar to flipping the edge sign from $-$ to $+$, although has different effects. 
		\begin{prop}
			\label{prop:eps:agreement:status:p:to:n:neigh}
			Suppose the sign of an edge $u=\set{u,v}$ is changed from $+$ to $-$ by a single call to $\textsc{FlipSign}_G(e)$. Let $w \in N_{G+}(u)$ and $\textsc{Threshold}=\card{N_{G^+}(u) \Delta N_{G^+}(w)}$. %
			\begin{enumerate}
				\item If $u$ and $w$ are in $\varepsilon$-agreement in $G^+$, then they are in $\varepsilon$-agreement in $H^+$ if any of the following cases hold:
				\begin{enumerate}
					\item $w \in N_{G^+}(u) \setminus N_{G^+}(v)$, $w \neq v$, and $\card{N_{G^+}(u)} \leq \card{N_{G^+}(w)}$.
					\item $w \in N_{G^+}(u) \setminus N_{G^+}(v)$, $w \neq v$, $\card{N_{G^+}(u)} > \card{N_{G^+}(w)}$, and $\textsc{Threshold} \leq \card{N_{G^+}(u)}$.
				\end{enumerate}
				\item If $u$ and $w$ are not in $\varepsilon$-agreement in $G^+$, then they are not in $\varepsilon$-agreement in $H^+$ if any of the following cases hold:
				\begin{enumerate}
					\item $w \in N_{G^+}(u) \cap N_{G^+}(v)$.
					\item $w \in N_{G^+}(u) \setminus N_{G^+}(v)$, $w \neq v$, $\card{N_{G^+}(u)} > \card{N_{G^+}(w)}$, and $\textsc{Threshold} > \card{N_{G^+}(u)}$.
				\end{enumerate}
			\end{enumerate}
		\end{prop}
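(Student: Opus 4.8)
The plan is to obtain Proposition~\ref{prop:eps:agreement:status:p:to:n:neigh} as a direct corollary of Lemma~\ref{lem:case:p:to:n:eps:agreement} together with Definition~\ref{def:eps:agree}, with no new estimate to prove: each of the four listed sub-cases is simply matched to the branch of Lemma~\ref{lem:case:p:to:n:eps:agreement} that controls the relevant \textsc{NonAgreement} value in the appropriate direction.

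First I would isolate the only structural change produced by $\textsc{FlipSign}_G(e=\set{u,v})$ in the $+\to-$ direction: the single edge $\set{u,v}$ leaves $G^+$, so $N_{H^+}(u)=N_{G^+}(u)\setminus\set{v}$, $N_{H^+}(v)=N_{G^+}(v)\setminus\set{u}$, and $N_{H^+}(x)=N_{G^+}(x)$ for every $x\notin\set{u,v}$. Hence, for any $w\in N_{G^+}(u)$ with $w\neq v$ we still have $w\in N_{H^+}(u)$; the adjacency clause of Definition~\ref{def:eps:agree} is preserved, and all that is left is to compare $\nonagreement{G^+}{u}{w}$ with $\nonagreement{H^+}{u}{w}$. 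I would also record that $w\neq v$ holds throughout the proposition: it is part of the hypothesis in cases (1-a), (1-b) and (2-b), and in case (2-a) it is forced by $w\in N_{G^+}(u)\cap N_{G^+}(v)$, since no vertex lies in its own positive neighbourhood.

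For part~(1), the hypothesis that $u$ and $w$ are in \epsagree in $G^+$ implies in particular $\nonagreement{G^+}{u}{w}<\varepsilon$. In case (1-a) I apply Lemma~\ref{lem:case:p:to:n:eps:agreement}(2); in case (1-b), where $\textsc{Threshold}\le\card{N_{G^+}(u)}$, I apply Lemma~\ref{lem:case:p:to:n:eps:agreement}(3-a) or (3-b) according to whether $\textsc{Threshold}<\card{N_{G^+}(u)}$ or $\textsc{Threshold}=\card{N_{G^+}(u)}$. In every instance $\nonagreement{H^+}{u}{w}\le\nonagreement{G^+}{u}{w}<\varepsilon$, and since $w$ is still adjacent to $u$ in $H^+$ this says exactly that $u$ and $w$ are in \epsagree in $H^+$. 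For part~(2), because $w\in N_{G^+}(u)$, the assumption that $u$ and $w$ are \emph{not} in \epsagree in $G^+$ is equivalent to $\nonagreement{G^+}{u}{w}\ge\varepsilon$. In case (2-a) I apply Lemma~\ref{lem:case:p:to:n:eps:agreement}(1), and in case (2-b), where $\textsc{Threshold}>\card{N_{G^+}(u)}$, I apply Lemma~\ref{lem:case:p:to:n:eps:agreement}(3-c); both give $\nonagreement{H^+}{u}{w}>\nonagreement{G^+}{u}{w}\ge\varepsilon$, so (as $w$ remains adjacent to $u$) $u$ and $w$ are not in \epsagree in $H^+$.

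The only point that needs attention — a verification rather than a genuine obstacle — is the bookkeeping of the inequality directions and of the boundary value $\textsc{Threshold}=\card{N_{G^+}(u)}$: one checks that in part~(1) every invoked branch of Lemma~\ref{lem:case:p:to:n:eps:agreement} yields a non-increasing \textsc{NonAgreement} (so the strict bound $<\varepsilon$ persists, the equality case being absorbed into part~(1) where it is harmless), while in part~(2) every invoked branch yields a strictly increasing \textsc{NonAgreement} (so $\ge\varepsilon$ persists). The configurations the proposition deliberately omits — namely $w\in N_{G^+}(u)\cap N_{G^+}(v)$ with $u,w$ in \epsagree in $G^+$, and $w\in N_{G^+}(u)\setminus N_{G^+}(v)$ with $u,w$ not in \epsagree and either $\card{N_{G^+}(u)}\le\card{N_{G^+}(w)}$ or $\textsc{Threshold}<\card{N_{G^+}(u)}$ — are exactly those where Lemma~\ref{lem:case:p:to:n:eps:agreement} pushes \textsc{NonAgreement} across $\varepsilon$ in the unfavourable direction; these are the ``$?$'' entries of Table~\ref{tbl:edge:agreement:summary} and genuinely require recomputation, so nothing about them is asserted here.
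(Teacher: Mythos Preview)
Your proposal is correct and follows essentially the same approach as the paper: the paper's proof simply says ``similar to the proof of Proposition~\ref{prop:eps:agreement:status:n:to:p:neigh}, using Lemma~\ref{lem:case:p:to:n:eps:agreement}'', and what you have written is exactly that argument spelled out, matching each sub-case to the appropriate branch of Lemma~\ref{lem:case:p:to:n:eps:agreement} and reading off the monotonicity of \textsc{NonAgreement}. Your extra care in noting that $w\neq v$ is automatic in case~(2-a) and that adjacency in $H^+$ is preserved is a welcome clarification not made explicit in the paper.
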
 %
		\begin{prop}
			\label{prop:eps:agreement:status:p:to:n:rest}
			Suppose the sign of an edge $u=\set{u,v}$ is changed from $+$ to $-$ by a single call to $\textsc{FlipSign}_G(e)$. Let $w \notin N_{G^+}(u) \cup N_{G^+}(v)$, then the $\varepsilon$-agreement of $x$ and $w$ is the same in both $G^+$ and $H^+$ for $x \in N_{G^+}(w)$.
		\end{prop}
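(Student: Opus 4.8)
The plan is to exploit the locality of the edit. Flipping the sign of $e=\set{u,v}$ from $+$ to $-$ only deletes the single edge $\set{u,v}$ from $G^+$, so $H^+$ is $G^+$ with that one edge removed; consequently $N_{H^+}(z)=N_{G^+}(z)$ for every vertex $z\notin\set{u,v}$, while $N_{H^+}(u)=N_{G^+}(u)\setminus\set{v}$ and $N_{H^+}(v)=N_{G^+}(v)\setminus\set{u}$. Since, by Definition \ref{def:eps:agree}, whether two adjacent vertices are in $\varepsilon$-agreement depends only on their two neighborhoods in the current positive graph, it suffices to show that for the pair $\set{x,w}$ under consideration neither endpoint equals $u$ or $v$, and then invoke neighborhood invariance.

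First I would rule out the coincidences. Because the sign of $e$ is $+$ in $G$, the edge $\set{u,v}$ lies in $G^+$, hence $u\in N_{G^+}(v)$ and $v\in N_{G^+}(u)$; so the hypothesis $w\notin N_{G^+}(u)\cup N_{G^+}(v)$ forces $w\notin\set{u,v}$. The same hypothesis applied to any $x\in N_{G^+}(w)$ gives $x\notin\set{u,v}$: if $x=u$ (resp. $x=v$) then $w\in N_{G^+}(u)$ (resp. $w\in N_{G^+}(v)$), contradicting the hypothesis. In particular the edge $\set{x,w}$ is distinct from $\set{u,v}$, so it is not touched by the flip and $x$ and $w$ remain adjacent in $H^+$.

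Next, from $w,x\notin\set{u,v}$ and the description of $H^+$ above I obtain $N_{H^+}(w)=N_{G^+}(w)$ and $N_{H^+}(x)=N_{G^+}(x)$. Substituting these equalities into the defining formula yields $\nonagreement{H^+}{x}{w}=\nonagreement{G^+}{x}{w}$, so the strict inequality $\nonagreement{H^+}{x}{w}<\varepsilon$ of \eqref{eq:eps:agreement} holds exactly when $\nonagreement{G^+}{x}{w}<\varepsilon$ does. Combined with the adjacency invariance from the previous step, this shows $x$ and $w$ are in $\varepsilon$-agreement in $H^+$ if and only if they are in $\varepsilon$-agreement in $G^+$, which is the claim.

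I do not anticipate a genuine obstacle here: this is precisely the ``untouched region'' case, and essentially all the content is the bookkeeping that excludes $w$ and $x$ from $\set{u,v}$, which the hypothesis handles. The argument is the mirror image of the proof of Proposition \ref{prop:eps:agreement:status:n:to:p:rest}, the only difference being that there the flip adds the edge $\set{u,v}$ to $G^+$ rather than removing it; if that proof is already in place, this one can reuse the same neighborhood-invariance observation verbatim.
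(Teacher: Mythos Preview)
Your argument is correct and is essentially the same approach the paper takes: the paper's proof is a one-line reference to Proposition~\ref{prop:eps:agreement:status:n:to:p:rest} together with case~(4) of Lemma~\ref{lem:case:p:to:n:eps:agreement}, which asserts exactly the equality $\nonagreement{G^+}{x}{w}=\nonagreement{H^+}{x}{w}$ you derive. You simply unfold that lemma's content by arguing neighborhood invariance directly and by explicitly ruling out $w,x\in\set{u,v}$, which the paper leaves implicit.
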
 %
	\subsubsection{Maintenance Algorithm for $G^+$}\label{sec:graph:maintenance:flip:sign}
		After building our basic tools in Sections \ref{sec:edge:sign:flip:p:to:n} and \ref{sec:edge:sign:flip:n:to:p}, we are ready to state an algorithm for maintaining the graph $G^+$. We can benefit from the locality of the effect of the \textsc{FlipSign}, as we show in the following corollary. 
		\begin{coro}
			\label{cor:agreecount:not:changing:non:neighbors}
			Suppose the sign of an edge $u=\set{u,v}$ is changed by a single call to $\textsc{FlipSign}_G(e)$. Then, $\agreecnt_{G^+}(w)=\agreecnt_{H^+}(w)$ for all $w\in V(G) \setminus S$ where $S=N_{G^+}(u)\cup N_{G^+}(v)\cup \set{u,v}$. 
		\end{coro}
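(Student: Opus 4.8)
The plan is to exploit the very restricted way in which $\textsc{FlipSign}(e)$ alters the positive subgraph: flipping the sign of $e=\set{u,v}$ either inserts the edge $\set{u,v}$ into $G^+$ (if it was $-$) or deletes it (if it was $+$), and changes nothing else. First I would record the two structural facts this implies. (i) For every vertex $x\in V(G)\setminus\set{u,v}$ the positive neighborhood is untouched, i.e. $N_{H^+}(x)=N_{G^+}(x)$. (ii) The only neighborhoods that change at all are $N_{G^+}(u)$ and $N_{G^+}(v)$, and each of these changes only by the single element $v$, resp. $u$.

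Next I would fix an arbitrary $w\in V(G)\setminus S$, where $S=N_{G^+}(u)\cup N_{G^+}(v)\cup\set{u,v}$. From $w\notin S$ we get $w\notin\set{u,v}$ and $w$ is non-adjacent to both $u$ and $v$ in $G^+$; by fact (ii) no edge incident to $w$ can have been affected by the flip, so $w$ is also non-adjacent to $u$ and $v$ in $H^+$, and hence $N_{H^+}(w)=N_{G^+}(w)$. In particular this common neighborhood $N_{G^+}(w)$ contains neither $u$ nor $v$.

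Then, for each $x\in N_{G^+}(w)$, I would argue that the $\varepsilon$-agreement status of the pair $\set{w,x}$ is identical in $G^+$ and in $H^+$. Indeed $x\neq u,v$, so by fact (i) $N_{H^+}(x)=N_{G^+}(x)$; combined with $N_{H^+}(w)=N_{G^+}(w)$, every quantity appearing in \eqref{eq:eps:agreement} for the pair $\set{w,x}$ — namely $N(w)\Delta N(x)$ and $\max\set{\card{N(w)},\card{N(x)}}$ — is the same in both graphs, so $\nonagreement{G^+}{w}{x}=\nonagreement{H^+}{w}{x}$ and the threshold test returns the same verdict. This is exactly the content of Proposition \ref{prop:eps:agreement:status:n:to:p:rest} in the $-\to+$ case and of Proposition \ref{prop:eps:agreement:status:p:to:n:rest} in the $+\to-$ case, so I would simply cite those rather than re-derive the equality.

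Finally I would conclude by counting: since $N_{G^+}(w)=N_{H^+}(w)$ and, term by term over this set, $\varepsilon$-agreement with $w$ is preserved, the defining cardinality of $\agreecnt_{G^+}(w)$ equals that of $\agreecnt_{H^+}(w)$; ranging over all $w\notin S$ gives the corollary. I do not anticipate a genuine obstacle here. The only point that needs care is justifying that $w$'s neighborhood is truly disjoint from $\set{u,v}$ in \emph{both} graphs, so that neither of the two vertices whose neighborhoods moved is ever counted toward $\agreecnt(w)$ — and that is precisely why the statement excludes $N_{G^+}(u)\cup N_{G^+}(v)$, not merely $u$ and $v$ themselves, from the set of possibly-affected vertices.
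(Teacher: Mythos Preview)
Your proposal is correct and follows essentially the same approach as the paper: fix $w\notin S$, invoke Propositions~\ref{prop:eps:agreement:status:n:to:p:rest} and~\ref{prop:eps:agreement:status:p:to:n:rest} to see that the $\varepsilon$-agreement status of every pair $\set{w,x}$ with $x\in N_{G^+}(w)$ is unchanged, and note that $N_{G^+}(w)=N_{H^+}(w)$ so the count is preserved. Your write-up is in fact more careful than the paper's, which leaves implicit the step you highlight at the end---that no $x\in N_{G^+}(w)$ can equal $u$ or $v$.
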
 %
		In other words, Corollary \ref{cor:agreecount:not:changing:non:neighbors} states that after a call to $\textsc{FlipSign}_G(e)$, we just need to take care of the $\agreecnt_{H^+}(w)$ in the subgraph induced by set $S$, e.g. $w\in S$. 
		
		The neighborhood of a vertex $w\in V$ in $G^+$, i.e. $N_{G^+}(w)$, can be partitioned into the following 4 parts: (1) $A=N_{G^+}(w) \cap N_{G^+}(u) \cap N_{G^+}(v)$, (2) $B=N_{G^+}(w) \cap \left(N_{G^+}(u) \setminus N_{G^+}(v)\right)$, (3) $C=N_{G^+}(w) \cap \left(N_{G^+}(v) \setminus N_{G^+}(u)\right)$, and (4) $D=N_{G^+}(w) \setminus \left(N_{G^+}(u) \cup N_{G^+}(v)\right)$. 
		If the vertex $w$ and a vertex in $D$ are in \epsagree in $G^+$, then, they would be in \epsagree in $H^+$, too. This is implied by the Proposition \ref{prop:eps:agreement:status:n:to:p:neigh} (the second case) and the Proposition \ref{prop:eps:agreement:status:p:to:n:rest}, i.e. the \epsagree of $w$ and the vertices in the set $D$ is equal in both $G^+$ and $H^+$. However, for the other three parts, $A$, $B$ and $C$, we need to consider the exact flip: %
		\begin{quote}\begin{description}
			\item[$-$ to $+$:] The \epsagree between the vertex $w$ and the vertices in either sets $B$ and $C$ requires verification, depending on extra conditions (Table \ref{tbl:edge:agreement:summary}, rows 3-10). However, the \epsagree of the vertex $w$ with vertices of $A$ in $G^+$ implies their \epsagree in $H^+$ by the Proposition \ref{prop:eps:agreement:status:n:to:p:rest}. 
			\item[$+$ to $-$:] By the Table \ref{tbl:edge:agreement:summary}, the \epsagree between the vertex $w$ and the vertices in either sets $A$, $B$ and $C$ requires verification, except for the vertices in $A$ which are not in \epsagree with $w$ in $G^+$, remain in non-\epsagree with $w$ in $H^+$. 
		\end{description}\end{quote}
		Considering this discussion, we can state Theorem \ref{thm:lightness:edge:flip}.
		\begin{thm}
			\label{thm:lightness:edge:flip}
			Suppose the sign of edge $f=\set{u,v}$ is changed by a single call to $\textsc{FlipSign}_G(e)$ and Let $S=N_{G^+}(u)\cup N_{G^+}(v)\cup \set{u,v}$. 
			\begin{enumerate}
				\item For all vertices $w \notin S$, if $w$ is $\varepsilon$-light in $G^+$, then it would remain $\varepsilon$-light in $H^+$, too. 
				\item For all vertices $w \notin S$, if $w$ is not a $\varepsilon$-light vertex in $G^+$, then it would not be $\varepsilon$-light in $H^+$, also.
				\item The $\varepsilon$-lightness of vertices in $S$ requires re-computation and verification in $H^+$. 
			\end{enumerate}
		\end{thm}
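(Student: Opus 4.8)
The plan is to reduce parts (1) and (2) to the invariance of \emph{both} the numerator and the denominator of the lightness ratio for vertices outside $S$, and to justify part (3) by pointing at the cases marked ``$?$'' in Table \ref{tbl:edge:agreement:summary}. First I would record the elementary observation that flipping the sign of $e=\set{u,v}$ alters the positive-neighborhood of no vertex other than $u$ and $v$: for every $w\notin\set{u,v}$ we have $N_{H^+}(w)=N_{G^+}(w)$ and hence $\card{N_{H^+}(w)}=\card{N_{G^+}(w)}$. Since $\set{u,v}\subseteq S$, this applies in particular to every $w\notin S$, so the denominator appearing in the $\varepsilon$-lightness test is unchanged for such $w$.

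Next I would invoke Corollary \ref{cor:agreecount:not:changing:non:neighbors}, which gives $\agreecnt_{G^+}(w)=\agreecnt_{H^+}(w)$ for every $w\in V(G)\setminus S$; this pins down the numerator of the lightness test. Combining the two facts, for every $w\notin S$ we get
$$\frac{\agreecnt_{H^+}(w)}{\card{N_{H^+}(w)}}=\frac{\agreecnt_{G^+}(w)}{\card{N_{G^+}(w)}},$$
so the strict inequality in the definition of $\varepsilon$-lightness holds in $H^+$ exactly when it holds in $G^+$. This establishes (1) and (2) simultaneously. The one point requiring a word of care is the degenerate case $\card{N_{G^+}(w)}=0$: one adopts the usual convention (a vertex with no positive neighbors forms its own singleton and is, say, not counted as light), and notes that this status is trivially unaffected by a flip that does not touch $w$'s neighborhood.

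For (3), I would not attempt a ``theorem'' in the strict sense but rather assemble the pointers already laid out in the discussion preceding the statement. For $w\in S$, partitioning $N_{G^+}(w)$ into $A,B,C,D$ as above, the entries of Table \ref{tbl:edge:agreement:summary} show that when flipping $-$ to $+$ the $\varepsilon$-agreement of $w$ with vertices in $B$ and $C$ (and with those in $A$ that were \emph{not} in agreement) is not determined by the status in $G^+$, while when flipping $+$ to $-$ the same holds for $A$, $B$, and $C$ among the pairs previously in agreement; additionally, for $w\in\set{u,v}$ even $\card{N_{H^+}(w)}$ changes by one. Hence neither the numerator nor the denominator of the lightness ratio is guaranteed to be preserved on $S$, so $\varepsilon$-lightness there must be recomputed; I would finish with a small explicit configuration (a star-like graph centered at $u$ with a few extra edges at $v$) exhibiting a vertex of $S$ whose lightness genuinely flips, to confirm that re-evaluation is not merely conservative.

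\textbf{Expected main obstacle.} There is no deep obstacle: the entire content of (1) and (2) is delegated to Corollary \ref{cor:agreecount:not:changing:non:neighbors}, and (3) is an assembly of the earlier propositions. The only fussy points are bookkeeping ones — handling the $0/0$ degenerate ratio cleanly, and making sure the example for (3) is placed in the correct neighborhood class so that the cited table rows really apply.
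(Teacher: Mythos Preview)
Your proposal is correct and matches the paper's own argument: the paper proves (1) and (2) by invoking the invariance of $\agreecnt$ outside $S$ (it cites Proposition~\ref{prop:eps:agreement:status:n:to:p:rest} directly, whereas you cite the packaged Corollary~\ref{cor:agreecount:not:changing:non:neighbors}, which is equivalent), and handles (3) by pointing at Table~\ref{tbl:edge:agreement:summary}. If anything, your version is slightly more explicit in separating the numerator and denominator of the lightness ratio; the paper leaves the denominator invariance implicit.
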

		
		\paragraph{Algorithm Intuition.} The major role of the Algorithms \ref{alg:update:g:plus:flip:sign:n:to:p} and \ref{alg:update:g:plus:flip:sign:p:to:n} is keeping the lightness of vertices valid during the \textsc{FlipSign} updates. The naive way of doing this requires a re-computation of all $\varepsilon$-agreements throughout the updated graph. However, using Corollary \ref{cor:agreecount:not:changing:non:neighbors} and Theorem \ref{thm:lightness:edge:flip}, we know that the lightness of vertices, except the ones in the neighborhood of the edge endpoints, does not change in a graph after a single call to \textsc{FlipSign}. In other words, the changes are only in the subgraph with vertex set $S = \set{u,v} \cup N_{G^+}(u) \cup N_{G^+}(v)$. Therefore, both the Algorithms \ref{alg:update:g:plus:flip:sign:n:to:p} and \ref{alg:update:g:plus:flip:sign:p:to:n} need to recompute the number of vertices in $\varepsilon$-agreement with the set $S$. To accomplish this efficiently, these Algorithms use Table \ref{tbl:edge:agreement:summary} and apply a pruning strategy. Note that while verifying the $\varepsilon$-agreement of two vertices by calling \textsc{VerifyEdge} (Algorithm \ref{alg:verify:edge:sign:flip}), the number of vertices in \epsagree with the edge endpoints is updated. This is later used to update the lightness of vertices in $S$. 
		
		After updating the graph $G^+$ to accommodate the flipped edge, we need to update the underlying clusters. We will defer this discussion to Section \ref{sec:clustering:maintenance}. However, for now we just call the $\textsc{MaintainClusteringAfterFlipSign}$ algorithm. 
		\paragraph{Time Complexity.} It is easy to verify that evaluating the $\nonagreement{G^+}{u}{v}$ requires $$\bigoh{\deg_{G^+}(u) + \deg_{G^+}(v)}=\bigoh{\Delta_{G^+}},$$ operations at the worst-case, where $\Delta_{G^+}$ is the maximum degree of the graph $G^+$. Therefore, calling \textsc{VerifyEdge}($f=\set{u,v})$ is of $\bigoh{\deg_{G^+}(u) + \deg_{G^+}(v)}$ or $\bigoh{\Delta_{G^+}}$ time complexity, in the worst case. 
		
		Lines 4-6 in $\textsc{Update}G^+\textsc{NegativeToPositive}$ (Algorithm \ref{alg:update:g:plus:flip:sign:n:to:p}) can be accomplished in time $\bigoh{\deg_{G^+}(u) + \deg_{G^+}(v)}$. The \texttt{For} loop on lines 7-15 calls the \textsc{VerifyEdge} procedure for at most $\bigoh{\card{A}}$ times. Moreover, the \texttt{For} loop on lines 18-32 calls the \textsc{VerifyEdge} procedure for at most $\bigoh{\card{B}}$ and $\bigoh{\card{C}}$ times for $X=B$ and $X=C$, respectively. As $A$, $B$ and $C$ are a partition of $N_{G^+}(u) \cup N_{G^+}(v)$, the procedure $\textsc{Update}G^+\textsc{NegativeToPositive}$ calls \textsc{VerifyEdge} for at most $\bigoh{\deg_{G^+}(u) + \deg_{G^+}(v)}$ times. Lines 35-41 in $\textsc{Update}G^+\textsc{NegativeToPositive}$ requires an evaluation of $\nonagreement{G^+}{u}{v}$, which can be accomplished in $\bigoh{\deg_{G^+}(u) + \deg_{G^+}(v)}$ time. Moreover, lines 42-45 requires $\bigoh{\card{S}}$ time where $\card{S}=\bigoh{\deg_{G^+}(u) + \deg_{G^+}(v)}$. As the same analysis applies to $\textsc{Update}G^+\textsc{PositiveToNegative}$ (Algorithm \ref{alg:update:g:plus:flip:sign:p:to:n}), we can state Theorem \ref{thm:flip:edge:time:complexity:updating:g:plus}.
		
		\paragraph{Space Complexity.} A careful examination of the $\textsc{FlipSign}(e=\set{u,v})$ algorithm shows that this algorithm requires $\bigoh{\deg_{G^+_t}(u)+\deg_{G^+_t}(v)}$ working memory in addition to the memory required for representing the graph $G^+[S]$ where $S=N_{G^+_t}(u) \cup N_{G^+_t}(v) \cup \set{u,v}$: (1) There is no need to construct the graph $G^+$, as we can simply substitute $G$ with $G^+$, every edge not present in $G^+$ has a negative sign. (2) Moreover, we do not need to load the complete graph $G^+$ into memory, thanks to the locality of edge flipping (Theorem \ref{thm:lightness:edge:flip}), we just need to load $G^+[S]$. (3) We need to store the set $S$, which in the worst-case can require $\bigoh{\card{V}}$ memory (consider a complete graph with all positive edges). As this worst-case scenario is very pessimistic, we use the size of the set $S$, i.e. $\bigoh{\deg_{G^+_t}(u)+\deg_{G^+_t}(v)}$ memory. (4) Verifying the \epsagree between two vertices also requires $\bigoh{\deg_{G^+_t}(u)+\deg_{G^+_t}(v)}$ memory which is reused for every edge in $G^+[S]$. Note that the time and space complexity stated in Theorem \ref{thm:flip:edge:time:complexity:updating:g:plus} only applies to maintaining the graph $G^+_t$. The time and space required for maintaining the clustering is stated in Theorem \ref{thm:cluster:maintain:flip:sign}. 
		\begin{thm}
			\label{thm:flip:edge:time:complexity:updating:g:plus}
			Suppose the sign of an edge $f=\set{u,v}$ is changed by a single call to $\textsc{FlipSign}_G(e)$. Then, updating the $\varepsilon$-lightness of vertices in $G$ can be accomplished in $\bigoh{\left(\deg_{G^+}(u) + \deg_{G^+}(v)\right)^2}$ time and $\bigoh{\deg_{G^+}(u)+\deg_{G^+}(v)}$ working memory in addition to the memory required for representing $G^+[S]$ where $S=N_{G^+}(u) \cup N_{G^+}(v) \cup \set{u,v}$.
		\end{thm}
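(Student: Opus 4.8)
The plan is to convert the two informal complexity paragraphs preceding the statement into a rigorous argument in three moves: (i) localize the update to $S=N_{G^+}(u)\cup N_{G^+}(v)\cup\set{u,v}$; (ii) bound the number of \textsc{VerifyEdge} calls made by $\textsc{Update}G^+\textsc{NegativeToPositive}$ and $\textsc{Update}G^+\textsc{PositiveToNegative}$; (iii) bound the cost of one call. Move (ii) times move (iii) yields the time bound, and the data structures touched along the way yield the space bound.

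For (i), I would use the locality results already proved: Theorem~\ref{thm:lightness:edge:flip} gives that every $w\notin S$ keeps its $\varepsilon$-lightness status, and Corollary~\ref{cor:agreecount:not:changing:non:neighbors} gives $\agreecnt_{G^+}(w)=\agreecnt_{H^+}(w)$ for those $w$. So the algorithm only has to recompute $\agreecnt_{H^+}(w)$ for $w\in S$ and re-test $\agreecnt_{H^+}(w)/\card{N_{H^+}(w)}<\varepsilon$; building $S$ (a merge of the two adjacency lists plus $\set{u,v}$) costs $\bigoh{\deg_{G^+}(u)+\deg_{G^+}(v)}$, $\card{S}\le\deg_{G^+}(u)+\deg_{G^+}(v)+2$, and with the degrees stored the re-tests cost $\bigoh{\card{S}}$ in total.

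For (ii), the key observation is that $\nonagreement{\cdot}{x}{y}$ changes only when $x$ or $y$ lies in $\set{u,v}$, because $N_{H^+}(z)=N_{G^+}(z)$ for all $z\notin\set{u,v}$. Hence for $w\in S\setminus\set{u,v}$, $\agreecnt_{H^+}(w)$ differs from $\agreecnt_{G^+}(w)$ only through the at most two edges $\set{u,w}$, $\set{v,w}$, and aggregating over $w$ the pairs to re-check are $\set{u,v}$ together with the $H^+$-edges incident to $u$ or $v$. Table~\ref{tbl:edge:agreement:summary} lets the two \textsc{Update} procedures settle for free the rows with a determined outcome (Propositions~\ref{prop:eps:agreement:status:n:to:p:neigh}--\ref{prop:eps:agreement:status:n:to:p:rest} and \ref{prop:eps:agreement:status:p:to:n:neigh}--\ref{prop:eps:agreement:status:p:to:n:rest}) and spend a \textsc{VerifyEdge} call only on the ``$?$'' rows (Lemmas~\ref{lem:case:n:to:p:eps:agreement} and \ref{lem:case:p:to:n:eps:agreement}); the partition of $N_{G^+}(w)$ into $A,B,C,D$ is exactly this case analysis. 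Since the relevant loops run over $A$, $B$, $C$, which partition a subset of $N_{G^+}(u)\cup N_{G^+}(v)$, the number of calls is $\bigoh{\card{A}+\card{B}+\card{C}}=\bigoh{\deg_{G^+}(u)+\deg_{G^+}(v)}$.

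For (iii) and the conclusion, fix a re-verified edge $\set{u,w}$: since $\card{N_{H^+}(u)}$ and $\card{N_{H^+}(w)}$ are maintained and $\card{N_{H^+}(u)\,\Delta\,N_{H^+}(w)}=\card{N_{H^+}(u)}+\card{N_{H^+}(w)}-2\card{N_{H^+}(u)\cap N_{H^+}(w)}$, evaluating $\nonagreement{H^+}{u}{w}$ reduces to the common-neighbor count $\card{N_{H^+}(u)\cap N_{H^+}(w)}$; and since $N_{H^+}(u)\subseteq S$, this equals $\card{N_{H^+}(u)\cap N_{H^+[S]}(w)}$, computed by scanning the $\bigoh{\card{S}}$ neighbors of $w$ in the loaded subgraph $G^+[S]$ against a hash copy of $N_{H^+}(u)$ built once. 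So each call is $\bigoh{\card{S}}=\bigoh{\deg_{G^+}(u)+\deg_{G^+}(v)}$ and, with (ii), the lightness maintenance runs in $\bigoh{\left(\deg_{G^+}(u)+\deg_{G^+}(v)\right)^{2}}$ time. For space, beyond $G^+[S]$ (and $G$ itself plays the role of $G^+$, a missing edge being a negative one) one stores only $S$, the hash copies of $N_{H^+}(u)$ and $N_{H^+}(v)$, one reusable $\bigoh{\card{S}}$ scratch array for a single $\varepsilon$-agreement evaluation, and the $\bigoh{\card{S}}$ updated $\agreecnt$ and lightness flags, i.e., $\bigoh{\deg_{G^+}(u)+\deg_{G^+}(v)}$ extra memory. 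The hard part is move (iii): a naive evaluation of $\nonagreement{H^+}{u}{w}$ costs $\bigoh{\deg_{H^+}(u)+\deg_{H^+}(w)}$ and $\deg_{H^+}(w)$ is \emph{not} controlled by $\deg_{G^+}(u)+\deg_{G^+}(v)$, so the argument must genuinely exploit both $N_{H^+}(u)\subseteq S$ (so only the neighbors of $w$ inside $G^+[S]$ are scanned) and that every degree is read from storage rather than recomputed; one must also check that no $\varepsilon$-agreement outside the scanned families is silently disturbed, which is precisely the $D$-part of the $A,B,C,D$ analysis together with Propositions~\ref{prop:eps:agreement:status:n:to:p:rest} and \ref{prop:eps:agreement:status:p:to:n:rest}.
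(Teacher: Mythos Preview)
Your proof is correct and follows the paper's own approach: the paper's proof is a single sentence pointing back to the preceding ``Time Complexity'' and ``Space Complexity'' paragraphs, and your three moves (i)--(iii) are a faithful formalization of exactly that discussion (localize to $S$ via Theorem~\ref{thm:lightness:edge:flip} and Corollary~\ref{cor:agreecount:not:changing:non:neighbors}; count \textsc{VerifyEdge} calls over the partition $A,B,C$; bound the per-call cost).

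One place where you are more careful than the paper is move (iii). The paper's time paragraph asserts that a single \textsc{VerifyEdge}$(\set{x,w})$ costs $\bigoh{\deg_{G^+}(x)+\deg_{G^+}(w)}$ and then multiplies by $\bigoh{\deg_{G^+}(u)+\deg_{G^+}(v)}$ calls, but $\deg_{G^+}(w)$ for $w\in N_{G^+}(u)\cup N_{G^+}(v)$ is not bounded by $\deg_{G^+}(u)+\deg_{G^+}(v)$, so the product is not obviously $\bigoh{(\deg_{G^+}(u)+\deg_{G^+}(v))^2}$. You close this gap explicitly: since $N_{H^+}(u)\subseteq S$, the intersection $N_{H^+}(u)\cap N_{H^+}(w)$ equals $N_{H^+}(u)\cap N_{H^+[S]}(w)$ and can be computed by scanning the $\bigoh{\card{S}}$ neighbors of $w$ inside the loaded subgraph $G^+[S]$, while the full degrees needed for the denominator and the symmetric-difference formula are read from storage. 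This is the right justification and is only implicit in the paper's reliance on loading $G^+[S]$ alone.
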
 %
		\begin{algorithm}[ht]
			\caption{$\textsc{Update}G^+\textsc{NegativeToPositive}$ procedure runs after every call to $\textsc{FlipSign}_G(e)$ which changed $e.\textsc{Sign}$ from $-$ to $+$.}
			\label{alg:update:g:plus:flip:sign:n:to:p}
			\begin{algorithmic}[1]
				\Procedure{Update$G^+\textsc{NegativeToPositive}$}{$e=\set{u,v},i$}
					\State Let $G_t$ be the graph after the $t^{th}$ modification command which was $\textsc{FlipSign}_G(e)$. 
					\State $G^+ \gets G_t[E^+]$
					\State Let $A \gets N_{G^+}(u) \cap N_{G^+}(v)$
					\State Let $B=\left(N_{G^+}(u) \setminus N_{G^+}(v)\right) \setminus \set{v}$
					\State Let $C=\left(N_{G^+}(v) \setminus N_{G^+}(u)\right) \setminus \set{u}$
					\ForAll{$w \in A$} \Comment{Applying Lemma \ref{lem:case:n:to:p:eps:agreement}, (1)}
						\State Let $f=\set{u,w}$ and $g=\set{v,w}$
						\If{\textbf{not} $f.\textsc{Agree}$}
							\State Call \textsc{VerifyEdge}($f=\set{u,w}$)
						\EndIf
						\If{\textbf{not} $g.\textsc{Agree}$}
							\State Call \textsc{VerifyEdge}($g=\set{v,w}$)
						\EndIf
					\EndFor
					\For{$(x, X)$ in $[(u, B), (v, C)]$} 
						\State \Comment{Run once with $x\gets u$ and $X\gets B$, another time with $x\gets v$ and $X\gets C$}
						\ForAll{$w \in X$}
							\State Let $f=\set{x,w}$
							\If{$f.\textsc{Agree}$}
								\If{$\deg_{G^+}(x) < \deg_{G^+}(w)$} \Comment{Applying Lemma \ref{lem:case:n:to:p:eps:agreement}, (2)}
									\State Call \textsc{VerifyEdge}($f=\set{x,w}$)
								\ElsIf{$\card{N_{G^+}(x) \Delta N_{G^+}(w)} < \deg_{G^+}(x)$} \Comment{Applying Lemma \ref{lem:case:n:to:p:eps:agreement}, (3-a)}
									\State Call \textsc{VerifyEdge}($f=\set{x,w}$)
								\EndIf
							\Else \Comment{$f.\textsc{Agree} = \textsc{False}$}
								\If{$\deg_{G^+}(x) \geq \deg_{G^+}(w)$ \textbf{ and } $\card{N_{G^+}(x) \Delta N_{G^+}(w)} > \deg_{G^+}(x)$} 
									\State \Comment{Applying Lemma \ref{lem:case:n:to:p:eps:agreement}, (3-c)}
									\State Call \textsc{VerifyEdge}($f=\set{x,w}$)
								\EndIf
							\EndIf
						\EndFor
					\EndFor
					\State Let $f=\set{u,v}$
					\If{$\nonagreement{G^+}{u}{v}\leq \varepsilon$}
						\State $f.\textsc{Agree} \gets \textsc{True}$
						\State $u.\agreecnt_{G^+} \gets u.\agreecnt_{G^+} + 1$
						\State $v.\agreecnt_{G^+} \gets v.\agreecnt_{G^+} + 1$
					\Else
						\State $f.\textsc{Agree} \gets \textsc{False}$
					\EndIf
					\State Let $S \gets \set{u,v} \cup u.\textsc{Neigh}_{G^+} \cup v.\textsc{Neigh}_{G^+}$
					\ForAll{$w \in S$} \Comment{Fixing the lightness of the vertices in $G^+$}
						\State $w.\textsc{IsLight} \gets \left(\frac{w.\agreecnt_{G^+}}{\deg_{G^+}(w)} < \varepsilon\right)$
					\EndFor
					\State \Return $G^+$ as $H^+$.
				\EndProcedure
			\end{algorithmic}
		\end{algorithm}
		\begin{algorithm}[h]
			\caption{The \textsc{VerifyEdge($f=\set{x,y}$)} maintains $f.\textsc{Agree}$, $x.\agreecnt$ and $y.\agreecnt$ whenever an update occurs.}
			\label{alg:verify:edge:sign:flip}
			\begin{algorithmic}[1]
				\Procedure{VerifyEdge}{$f=\set{x,y}$}
					\If{\textbf{not} $f.\textsc{Agree}$ \textbf{and} $\nonagreement{G^+}{x}{y} < \varepsilon$}
						\State $f.\textsc{Agree} \gets \textsc{True}$
						\State $x.\agreecnt \gets x.\agreecnt + 1$
						\State $y.\agreecnt \gets y.\agreecnt + 1$
					\ElsIf{$f.\textsc{Agree}$ \textbf{and} $\nonagreement{G^+}{x}{y} \geq \varepsilon$}
						\State $f.\textsc{Agree} \gets \textsc{False}$
						\State $x.\agreecnt \gets x.\agreecnt - 1$
						\State $y.\agreecnt \gets y.\agreecnt - 1$
					\EndIf
				\EndProcedure
			\end{algorithmic}
		\end{algorithm}
		\begin{algorithm}[h]
			\caption{\textsc{FlipSign}($e$) flips the sign of an edge $e$ and updates the corresponding clusters}
			\label{alg:flip:sign}
			\begin{algorithmic}[1]
				\Procedure{FlipSign}{$e$}
					\State $G_t \gets G_{t-1}$
					\State $\textsc{PreviousSign} \gets e.\sign$
					\State Change the sign of edge $e$ in $G_t$
					\State Let $G^+=G_t[E^+]$
					\If{\textsc{PreviousSign} is $+$}
						\State Call $\textsc{Update}G^+\textsc{PositiveToNegative}(e)$ \Comment{Updates $G_t$ to reflect sign flip of edge}
					\Else \Comment{\textsc{PreviousSign} is $-$}
						\State Call $\textsc{Update}G^+\textsc{NegativeToPositive}(e)$ \Comment{Updates $G_t$ to reflect sign flip of edge}
					\EndIf
					\State Call $\textsc{MaintainClusteringAfterFlipSign}$ \Comment{Described on Page \pageref{alg:description:cluster:maintain:flip:sign}}
				\EndProcedure
			\end{algorithmic}
		\end{algorithm}

\subsection{Maintaining Clusters in Edge Sign Flip}\label{sec:clustering:maintenance}
	Let $\mathcal{C}_t$ be the clustering at time step $t$ for $t=0,1,\ldots$. Also, let $\op_1,\op_2,\ldots$ be a sequence of operations where $\op_t$ is applied to the graph $G_{t-1}$ and $G_t$ is its result. 
	\begin{defn}[$\varepsilon$-critical]
		\label{def:eps:critical}
		Let $G$ be a signed graph and $G^+=G[E^+]$ and assume that $\widetilde{G^+}$ be the sparsified graph of $G^+$ (Line 5 of the Algorithm \ref{alg:correlation:clustering}). Then, a vertex $v$ (edge $e$) whose removal modifies the connected components in $\widetilde{G^+}$ is called an $\varepsilon$-critical vertex (edge).
	\end{defn}
	Note that we used the term ``modifying the connected components'', rather than talking about the change in the number of components in Definition \ref{def:eps:critical}. Assume the graph in Figure \ref{fig:example:graph} with the following setting: all the edges have positive signs, vertices $v,w,x,y$ and $z$ are $\varepsilon$-light, and vertex $u$ is not $\varepsilon$-light. Moreover, assume that any path between vertices $x$ and $y$ passes the vertex $u$, any path between vertices $w$ and $z$ passes the vertex $v$ and any path between $w$ and either $x,y$ or between $z$ and either $x,y$ passes through the edge $\set{u,v}$. In this setting, the edges $\set{v,w}$ and $\set{v,z}$ would be removed in the corresponding sparsified graph. Without loss of generality, assume that the sparsified graph consisted of 3 components. It is clear that flipping the sign of edge $\set{u,v}$ would result in its removal. However, consider removing the edge $\set{u,v}$: it makes $u$ an $\varepsilon$-light vertex, however changes the vertex $v$ to a non-$\varepsilon$-light vertex. In this case, the edges $\set{x,u}$ and $\set{y,u}$ would be removed in the corresponding sparsified graph whereas the edges $\set{v,w}$ and $\set{v,z}$ remain. In this case, the number of connected components remains unchanged, e.g. 3, but the underlying clustering is changed. 
	\begin{figure}[h]
		\centering
		\includegraphics[width=.3\textwidth]{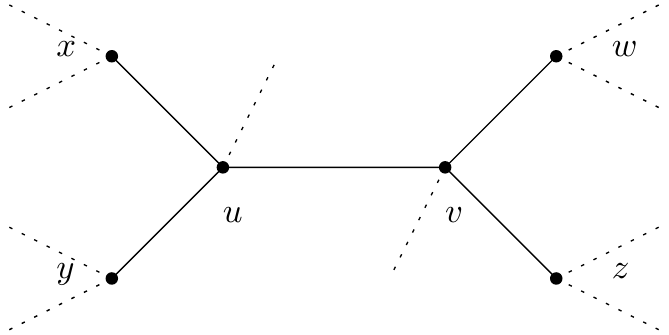}
		\caption{A $\varepsilon$-critical edge whose removal does not change the number of connected components in the sparsified graph.}
		\label{fig:example:graph}
	\end{figure}
	
	Moreover, it is notable that any cut edge of $G^+$ is a $\varepsilon$-critical edge, however, there may be an edge $e=\set{u,v}$ which is not a cut edge of $G^+$, but is $\varepsilon$-critical. For example, consider the graph $G^+$ in Figure \ref{fig:critical:edge:not:cut}(a), consisting of a single connected component, where vertices $w$ and $x$ are both $\varepsilon$-light and $u$ and $v$ are not $\varepsilon$-light. Also assume that by removing the edge $e\set{u,v}$, the vertices $u$ and $v$ would become $\varepsilon$-light. In this case, the edge $e=\set{u,v}$ is not a cut, however, if deleted, the components $C_u$, $C_v$, $C_w$ and $C_x$ will not be connected in the corresponding sparsified graph. The same holds for $\varepsilon$-critical vertices, i.e. a cut vertex of $G^+$ is always $\varepsilon$-critical, however, the converse is not necessarily true. For example, in Figure \ref{fig:critical:edge:not:cut}(b), the vertex $y$ is not a cut vertex, but assuming that its deletion causes vertices $u$ and $v$ to become an $\varepsilon$-light vertex. Prior to deleting the vertex $y$, assume that vertex $w$ was $\varepsilon$-light. Then, removal of $y$ causes removal of the edges $\set{u,w}$ and $\set{v,w}$, which causes separation of $C_w$ from the rest of the graph. 
	\begin{figure}[h]
		\centering
		\includegraphics[width=.9\textwidth]{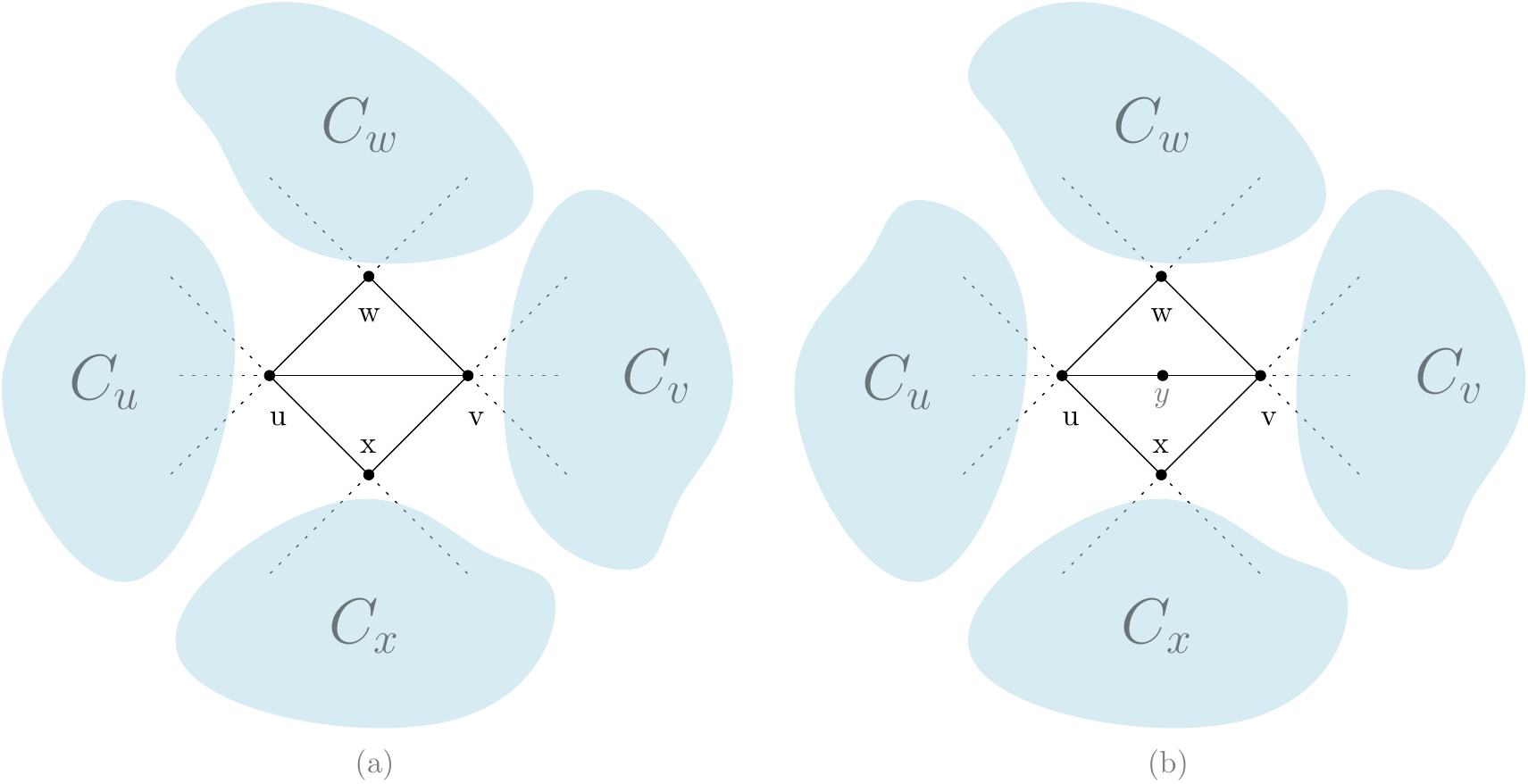}
		\caption{Counterexamples where (a) $e=\set{u,v}$ is a $\varepsilon$-critical edge, but not a cut edge of $G^+$. (b) $y$ is  a $\varepsilon$-critical vertex, but not a cut vertex of $G^+$.}
		\label{fig:critical:edge:not:cut}
	\end{figure}

	Summing up the previous discussion regarding $\varepsilon$-critical vertices and edges, we can state the following result.
	\begin{lem}
		\label{lem:flip:sign:change:sparsified:graph}
		Let $\op_t=\textsc{FlipSign}(e)$ for an edge $e=\set{u,v}$ and $S=N_{G^+_{t-1}}(u) \cup N_{G^+_{t-1}}(v) \cup \set{u,v}$.%
		Then, for all $w\notin S$, the presence of all edges $f=\set{x,w}$ with positive sign in $\widetilde{G_{t-1}^+}$ and $\widetilde{G_{t}^+}$ does not change, except for the following case where a change is possible:
		\begin{itemize}
			\item $w$ is an $\varepsilon$-light vertex in $G^+_{t-1}$, and 
			\item $x \in S$, and %
			\item $w$ and $x$ are in $\varepsilon$-agreement.
		\end{itemize}
	\end{lem}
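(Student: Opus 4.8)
The plan is to reduce the statement to two locality facts already proved in this section: the $\varepsilon$-agreement status of positively signed edges incident to a vertex outside $S$ is unaffected by the flip (Propositions~\ref{prop:eps:agreement:status:n:to:p:rest} and~\ref{prop:eps:agreement:status:p:to:n:rest}), and the $\varepsilon$-lightness of a vertex outside $S$ is unaffected (Theorem~\ref{thm:lightness:edge:flip}, parts 1--2). The key bookkeeping observation is that, by Lines 3--4 of Algorithm~\ref{alg:correlation:clustering}, a positively signed edge $f=\set{x,w}$ survives into $\widetilde{G^+}$ precisely when $x$ and $w$ are in $\varepsilon$-agreement \emph{and} at least one of $x,w$ is $\varepsilon$-heavy; so it suffices to compare how this compound predicate behaves across the flip for each such $f$ with $w\notin S$.

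First I would observe that $w\notin S$ implies $w\notin\set{u,v}$, so flipping the sign of $\set{u,v}$ does not touch the positively signed edges incident to $w$: $N_{G^+_{t-1}}(w)=N_{G^+_t}(w)$. Hence the family of positive edges at $w$ is the same before and after the operation, and we only need to compare membership in $\widetilde{G^+_{t-1}}$ against membership in $\widetilde{G^+_t}$ edge by edge. Fix such an edge $f=\set{x,w}$, so $x\in N_{G^+_{t-1}}(w)$. Since $w\notin N_{G^+_{t-1}}(u)\cup N_{G^+_{t-1}}(v)$, the appropriate ``rest'' proposition (Proposition~\ref{prop:eps:agreement:status:n:to:p:rest} when the flip is $-$ to $+$, Proposition~\ref{prop:eps:agreement:status:p:to:n:rest} when it is $+$ to $-$) gives that the $\varepsilon$-agreement of $x$ and $w$ is identical in $G^+_{t-1}$ and $G^+_t$, and Theorem~\ref{thm:lightness:edge:flip} gives that $w$ keeps its $\varepsilon$-lightness status.

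It then remains to control the $\varepsilon$-lightness of the other endpoint $x$, which I would do by a case split. If $x\notin S$, Theorem~\ref{thm:lightness:edge:flip} again applies and the lightness of $x$ is preserved; together with the previous paragraph, both predicates governing membership of $f$ are unchanged, so $f\in\widetilde{G^+_{t-1}}$ iff $f\in\widetilde{G^+_t}$, and no change occurs. If $x\in S$, the lightness of $x$ may genuinely flip, so I would split further on the lightness of $w$. If $w$ is $\varepsilon$-heavy in $G^+_{t-1}$ (hence also in $G^+_t$), then ``at least one endpoint heavy'' is witnessed by $w$ in both graphs regardless of $x$, so membership of $f$ reduces to the $\varepsilon$-agreement predicate, which is unchanged. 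If $w$ is $\varepsilon$-light, then membership of $f$ is equivalent to ``$x$ and $w$ in $\varepsilon$-agreement \emph{and} $x$ is $\varepsilon$-heavy''; when $x$ and $w$ are not in $\varepsilon$-agreement the edge is absent from both sparsified graphs, and only when they \emph{are} in $\varepsilon$-agreement can the change in $x$'s lightness flip membership. That last subcase --- $w$ is $\varepsilon$-light in $G^+_{t-1}$, $x\in S$, and $x,w$ in $\varepsilon$-agreement --- is exactly the exceptional case in the statement, which finishes the proof.

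I expect the only delicate point to be carrying the compound ``agreement AND not-both-light'' survival condition faithfully through the case split on $x\in S$ versus $x\notin S$ and then on the lightness of $w$; everything else is a direct appeal to the locality results already established, and there are no new estimates to carry out.
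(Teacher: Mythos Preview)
Your proof is correct and uses exactly the ingredients the paper relies on (Propositions~\ref{prop:eps:agreement:status:n:to:p:rest} and~\ref{prop:eps:agreement:status:p:to:n:rest} for agreement locality, Theorem~\ref{thm:lightness:edge:flip} for lightness locality), carrying out the case split on $x\in S$ versus $x\notin S$ and on the lightness of $w$ cleanly. The paper's own proof is simply a pointer to the preceding discussion, so your argument is in the same spirit but considerably more explicit and rigorous than what the paper provides.
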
 %
	Moreover, we can use the following result to prune the search space. 
	\begin{lem}
		\label{lem:flip:sign:prune:state:space}
		Let $\op_t=\textsc{FlipSign}(e)$ for an edge $e=\set{u,v}$. If there exists a cluster $C\in\mathcal{C}_{t-1}$ such that $C\subseteq V \setminus \left(N_{G^+_{t-1}}(u) \cup N_{G^+_{t-1}}(v) \cup \set{u,v}\right)$, we have $C\in\mathcal{C}_t$. 
	\end{lem}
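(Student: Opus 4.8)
The plan is to prove this by showing that $C$ is a connected component of the sparsified graph $\widetilde{G_t^+}$, which by Lines 5--6 of Algorithm \ref{alg:correlation:clustering} is the same as saying $C\in\mathcal{C}_t$. Write $S=N_{G^+_{t-1}}(u)\cup N_{G^+_{t-1}}(v)\cup\set{u,v}$, so that the hypothesis reads $C\cap S=\emptyset$. I would first isolate two facts on which everything rests: (i) $\op_t$ alters only the positive neighborhoods of $u$ and of $v$, so $N_{G_t^+}(z)=N_{G_{t-1}^+}(z)$ for every $z\notin\set{u,v}$, and the positive degrees off $\set{u,v}$ are unchanged; and (ii) by Corollary \ref{cor:agreecount:not:changing:non:neighbors} together with (i), every vertex $z\notin S$ has the same $\agreecnt_{G^+}$ value before and after, hence keeps its $\varepsilon$-lightness status, which is also the content of Theorem \ref{thm:lightness:edge:flip}.

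The easy half is to show that $C$ stays connected and does not get joined to the rest of $V\setminus S$. By Lemma \ref{lem:flip:sign:change:sparsified:graph}, a positive edge with \emph{both} endpoints outside $S$ keeps its membership in the sparsified graph across the update --- the exceptional case there requires one endpoint in $S$ --- so $\widetilde{G_t^+}$ and $\widetilde{G_{t-1}^+}$ induce the same subgraph on $V\setminus S$. Since $C\subseteq V\setminus S$ is a connected component of $\widetilde{G_{t-1}^+}$, no edge of $\widetilde{G_{t-1}^+}$ leaves $C$; hence $C$ is a connected component of $\widetilde{G_{t-1}^+}[V\setminus S]=\widetilde{G_t^+}[V\setminus S]$, so $C$ induces a connected subgraph of $\widetilde{G_t^+}$ and $\widetilde{G_t^+}$ has no edge between $C$ and $V\setminus(C\cup S)$.

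The delicate half, and the step I expect to be the main obstacle, is ruling out a \emph{new} edge of $\widetilde{G_t^+}$ between $C$ and $S$. Suppose $\set{w,x}\in\widetilde{G_t^+}$ with $w\in C$ and $x\in S$. Since $w\notin\set{u,v}$ this is not the flipped edge, so $\set{w,x}\in E_{t-1}^+$; and $x\notin\set{u,v}$, for otherwise $w\in N_{G^+_{t-1}}(u)\cup N_{G^+_{t-1}}(v)\subseteq S$, contradicting $w\in C$. By (i) the neighborhoods of $w$ and of $x$ are unchanged, so $w$ and $x$ have the same $\varepsilon$-agreement status in $G_{t-1}^+$ and $G_t^+$, and by (ii) $w$ keeps its lightness. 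Now Lemma \ref{lem:flip:sign:change:sparsified:graph} says $\set{w,x}$ can switch from absent in $\widetilde{G_{t-1}^+}$ to present in $\widetilde{G_t^+}$ only if $w$ is $\varepsilon$-light and $w,x$ are in $\varepsilon$-agreement; since $x\notin C$ forces $\set{w,x}\notin\widetilde{G_{t-1}^+}$, the edge must have been discarded by Line 4, i.e. $x$ is $\varepsilon$-light in $G_{t-1}^+$, whereas $\set{w,x}\in\widetilde{G_t^+}$ with $w$ light forces $x$ to be $\varepsilon$-heavy in $G_t^+$. Closing the proof therefore reduces to showing that such an $x\in S\setminus\set{u,v}$ cannot pass from $\varepsilon$-light to $\varepsilon$-heavy. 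I would attack this by writing $\agreecnt_{G_t^+}(x)-\agreecnt_{G_{t-1}^+}(x)$ as the net change in the number of $\varepsilon$-agreements of $x$ with $u$ and with $v$ (the only vertices whose neighborhood moves), bounding it by a small constant, and combining with how $\nonagreement{G^+}{x}{u}$ and $\nonagreement{G^+}{x}{v}$ move under Lemmas \ref{lem:case:n:to:p:eps:agreement} and \ref{lem:case:p:to:n:eps:agreement}, together with the structural lower bound on $\card{N_{G^+}(x)\,\Delta\,N_{G^+}(u)}$ relative to $\card{N_{G^+}(x)}$ forced by the common positive neighbor $w$ lying outside $S$. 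Making that inequality sharp enough to rule out the lightness flip --- or, if it turns out too weak, identifying the extra hypothesis on $C$ (for example that $C$ also avoid the positive neighborhood of $S$, which removes the offending cross edge outright) --- is, in my view, the real content of the lemma.
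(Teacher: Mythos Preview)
Your ``easy half'' already contains the paper's entire proof: the paper just cites Lemma~\ref{lem:flip:sign:change:sparsified:graph}, observes that every edge of $\widetilde{G_{t-1}^+}[C]$ has both endpoints outside $S$, concludes $\widetilde{G_{t-1}^+}[C]=\widetilde{G_t^+}[C]$, and stops. It never discusses edges between $C$ and $V\setminus C$, so on that boundary question you are more careful than the paper, not less.

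Where your proposal has a genuine gap is the ``delicate half''. Your reduction is right: a new sparsified edge $\{w,x\}$ with $w\in C$, $x\in S$ forces $x\in S\setminus\{u,v\}$, the $\varepsilon$-agreement status of $\{w,x\}$ is frozen (both $N_{G^+}(w)$ and $N_{G^+}(x)$ are unchanged), $w$ stays light, and so the edge can appear only if $x$ flips from light to heavy. But the attack you outline cannot rule this out. Since $x\notin\{u,v\}$, its positive neighbourhood is unchanged, and for every neighbour $y\notin\{u,v\}$ of $x$ the pair $N_{G^+}(x),N_{G^+}(y)$ is unchanged; hence only the agreements of $x$ with $u$ and with $v$ can move, giving $\bigl|\agreecnt_{G_t^+}(x)-\agreecnt_{G_{t-1}^+}(x)\bigr|\le 2$ while $\card{N_{G^+}(x)}$ is fixed. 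Nothing in the hypotheses prevents $x$ from sitting one or two units below the threshold $\varepsilon\,\card{N_{G^+}(x)}$ and crossing it after the flip, and the ``structural lower bound'' you hope for---that a common positive neighbour $w\notin S$ constrains $\card{N_{G^+}(x)\Delta N_{G^+}(u)}$ relative to $\card{N_{G^+}(x)}$---does not materialise, since one shared neighbour among arbitrarily large neighbourhoods imposes no useful inequality. So the step you flag as ``the real content'' is not settled by your plan; you sense this in your final sentence, but the fallback of strengthening the hypothesis to $C\cap N_{G^+_{t-1}}(S)=\emptyset$ would change the lemma rather than prove it. The paper's one-line argument simply does not engage with this $C$--$S$ boundary at all.
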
 %

	\paragraph{Maintaining Clustering After an Edge Sign Flip}\label{alg:description:cluster:maintain:flip:sign}
	The basic idea in maintaining the clustering after a $\textsc{FlipSign}(e=\set{u,v})$ is updating the underlying graph $G^+_t$ and then, computing the connected components of $\widetilde{G_{t}^+}[S]$ for the set $S=N_{G^+_{t-1}}(u) \cup N_{G^+_{t-1}}(v) \cup \set{u,v}$. Let $D_1,\ldots,D_k$ be the connected components of $\widetilde{G_{t}^+}[S]$ and use $D^w$ to denote the connected component which vertex $w$ belongs. Moreover, we define a status ordering for clusters $C\in\mathcal{C}_{t-1}$ as (1) \unprocessed, (2) \copycand, (3) \mergecand, and (4) \splitcand. When marking, a mark would be applied only if the order of the new mark is strictly greater than the current mark. The procedure $\textsc{Mark}(C,\textsc{Mark},t-1)$ asks for applying \textsc{Mark} to the cluster $C$ in $\mathcal{C}_{t-1}$. The aim of these statuses and their ordering would become clear in the sequel. Now, we are ready to state the cluster maintenance algorithm after a $\textsc{FlipSign}(e=\set{u,v})$. 
	\begin{enumerate}
		\item At first, we mark all clusters $C\in\mathcal{C}_{t-1}$ as \unprocessed.
		\item For all clusters $C\in\mathcal{C}_{t-1}$ which satisfy $C\subseteq V_t\setminus S$, we mark them as \copycand. These clusters do not change and would be present in $\mathcal{C}_t$ by Lemma \ref{lem:flip:sign:prune:state:space}.
		\item For each pair of vertices $w,w' \in S$, mark $C^w$ and $C^{w'}$ as either \mergecand if $D^w=D^{w'}$ and $C^w \neq C^{w'}$, or \copycand if $D^w=D^{w'}$ and $C^w=C^{w'}$. Moreover, just for clarity and ease of notation, assume that with each \mergecand cluster, we store the set of pairs $(w,w')$ which caused $C^w$ and $C^{w'}$ to be marked as \mergecand.%
		A set of clusters marked as \mergecand with shared vertex pairs is called a cluster group, since we intend to merge them and construct a new cluster. Moreover, if $D^w\neq D^{w'}$ and $C^w = C^{w'}$, then mark $C^w$ as \splitcand. Note that when marking a cluster as \splitcand, we store its cluster group as the algorithm tries to apply \mergecand to it, although we do not change the mark. 
	\end{enumerate}
	At the end of this process, no cluster is marked \unprocessed in $\mathcal{C}_{t-1}$. Starting with an empty collection $\mathcal{C}_t$, we first add all the clusters of $C\in\mathcal{C}_{t-1}$ which are marked \copycand to $\mathcal{C}_t$ with their own ids. Then, we merge each cluster group and the resulting cluster gets a new id, but we postpone its addition to $\mathcal{C}_t$ to the time after processing clusters with \splitcand mark. Next, for each cluster $C$ with \splitcand mark, we need to compute the connected components of a modification of $\widetilde{G_{t-1}^+}[C]$ where all edges with both endpoints in $S$ are removed. If this subgraph is connected, then we merge $C$ with its corresponding cluster group and continue to the next cluster with \splitcand. However, suppose there are more than one connected components, e.g. $C_1,\ldots,C_o$. Then, for each connected component $C_i$, if $C_i \subseteq V_t \setminus S$, we add it directly to the $\mathcal{C}_t$ with a new id. Otherwise, let $w \in C \cap S$. Then, we simply merge it to the cluster group containing a pair $(w,x)$ for any $x \in S$. The new cluster replaces the old merged cluster in $\mathcal{C}_{t-1}$ and inherits its mark and cluster group.
	
	Whenever there is no more clusters marked \splitcand in $\mathcal{C}_{t-1}$, we append all the remaining clusters (which are marked \mergecand) in $\mathcal{C}_{t-1}$ to $\mathcal{C}_{t}$. We call this algorithm $\textsc{MaintainClusteringAfterFlipSign}$.
	
	\begin{thm}
		\label{thm:cluster:maintain:flip:sign}
		The algorithm $\textsc{MaintainClusteringAfterFlipSign}$ gives the same output as the \textsc{Baseline} at any time step $t=1,2,\ldots$ with $\op_t=\textsc{FlipSign}(e)$ and runs in $$\bigoh{\card{V_t}+\card{E^+_t}+\left(\deg_{G^+_t}(u) + \deg_{G^+_t}(v)\right)^2+\card{\mathcal{C}_{t-1}}},$$ time with $\bigoh{\left(\deg_{G^+_t}(u) + \deg_{G^+_t}(v)\right)^2}$ working memory in addition to memory required for representing clustering $\mathcal{C}_t$. %
	\end{thm}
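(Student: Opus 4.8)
Since \textsc{Baseline} outputs exactly the connected components of the sparsified graph $\widetilde{G_t^+}$, the plan for correctness is to show that the collection $\mathcal C_t$ assembled by $\textsc{MaintainClusteringAfterFlipSign}$ equals the partition of $V_t$ into components of $\widetilde{G_t^+}$. First I would note, via Theorem \ref{thm:flip:edge:time:complexity:updating:g:plus} together with Corollary \ref{cor:agreecount:not:changing:non:neighbors} and Theorem \ref{thm:lightness:edge:flip}, that the $\varepsilon$-agreement flags and $\varepsilon$-lightness bits on which the clustering step operates are precisely those of $G_t^+$; hence $\widetilde{G_t^+}$, and in particular the induced graph $\widetilde{G_t^+}[S]$ with components $D_1,\dots,D_k$, is read off correctly from the maintained state. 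Next I would apply Lemma \ref{lem:flip:sign:change:sparsified:graph}: the only edges whose membership in the sparsified graph can differ between $\widetilde{G_{t-1}^+}$ and $\widetilde{G_t^+}$ are incident to $S$. Consequently, by Lemma \ref{lem:flip:sign:prune:state:space}, every old cluster $C\in\mathcal C_{t-1}$ with $C\subseteq V_t\setminus S$ is still a component of $\widetilde{G_t^+}$ — exactly the clusters frozen as \copycand in step 2 — and, conversely, every component of $\widetilde{G_t^+}$ is either one of those or is contained in the union $U$ of the at most $\card S$ clusters of $\mathcal C_{t-1}$ that meet $S$; the converse holds because a component disjoint from $S$ uses only edges untouched by the flip, so it is already a component of $\widetilde{G_{t-1}^+}$, which by the same locality must then be disjoint from $S$ and hence \copycand.

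The crux is to show that, restricted to $U$, the marking of step 3 reproduces the components of $\widetilde{G_t^+}$. I would prove: (i) two vertices $w,w'\in S$ lie in the same component of $\widetilde{G_t^+}$ exactly when they are linked by the chain of \mergecand identifications generated from the relations $D^w=D^{w'}$ across old clusters — one direction is immediate since each $D_j$ is contained in a component of $\widetilde{G_t^+}$, the other because any path between two $S$-vertices in $\widetilde{G_t^+}$ decomposes into sub-paths each staying inside a single old cluster, whose $S$-endpoints must then be $D$-equivalent (or that cluster is a \splitcand); (ii) an old cluster $C$ is genuinely torn apart in $\widetilde{G_t^+}$ iff it owns two $S$-vertices in different $D$-components, and the surviving fragments are those obtained from $\widetilde{G_{t-1}^+}[C]$ by deleting its $S$-internal edges and then reattaching, through their $S$-endpoints, to the \mergecand group dictated by the $D$-components (fragments disjoint from $S$ becoming new components, consistently with the \copycand case) — this step needs the full catalogue of agreement/lightness transitions summarised in Table \ref{tbl:edge:agreement:summary}, i.e. Lemmas \ref{lem:case:n:to:p:eps:agreement} and \ref{lem:case:p:to:n:eps:agreement} and Propositions \ref{prop:eps:agreement:status:n:to:p:neigh}--\ref{prop:eps:agreement:status:p:to:n:rest}, to certify that every within-$C$ edge whose presence can flip is incident to $S$ and to pin down where the $\varepsilon$-light vertices of $C\setminus S$ land; (iii) the status order \unprocessed $<$ \copycand $<$ \mergecand $<$ \splitcand, and the rule that a mark is overwritten only by a strictly larger one, guarantee that a cluster that is both a merge and a split candidate is handled as a split, so no spurious merge survives. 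Combining (i)--(iii) with the \copycand analysis yields $\mathcal C_t=$ components of $\widetilde{G_t^+}$, i.e. equality with \textsc{Baseline}.

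For the complexity I would charge the work phase by phase. Initialising all clusters to \unprocessed and the closing sweep that emits the \copycand and \mergecand clusters cost $\bigoh{\card{\mathcal C_{t-1}}}$; flagging which clusters meet $S$ adds $\bigoh{\card S}$. Forming $\widetilde{G_t^+}[S]$, extracting $D_1,\dots,D_k$, and running over the $\bigoh{\card S^2}$ ordered pairs $w,w'\in S$ — each handled with $\bigoh{1}$ lookups of $D^w$, $C^w$ and a disjoint-set union over the $\bigoh{\card S}$ affected clusters — costs $\bigoh{\card S^2}$, where $\card S=\bigoh{\deg_{G^+_t}(u)+\deg_{G^+_t}(v)}$; this also dominates the working memory, $\bigoh{\card S^2}$ for the edge set of $\widetilde{G_t^+}[S]$ and $\bigoh{\card S}$ for component labels and the union--find structure. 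The only unavoidably global work is emitting the preserved clusters and recomputing the internal connected components of the \splitcand clusters: since distinct \splitcand clusters are vertex-disjoint, each vertex and each positive edge of $G_t^+$ is touched at most once, for a total of $\bigoh{\card{V_t}+\card{E^+_t}}$, and this relabelling is done in place inside the representation of $\mathcal C_t$, so it adds nothing to the working memory. Summing the phases gives the claimed $\bigoh{\card{V_t}+\card{E^+_t}+(\deg_{G^+_t}(u)+\deg_{G^+_t}(v))^2+\card{\mathcal C_{t-1}}}$ time and $\bigoh{(\deg_{G^+_t}(u)+\deg_{G^+_t}(v))^2}$ working memory.

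The main obstacle is step (ii) of the correctness argument: showing that the coarse operation ``take $\widetilde{G_{t-1}^+}[C]$, delete its $S$-internal edges, re-glue via $D$-components'' recovers exactly the fragments of $C$ seen by $\widetilde{G_t^+}$. This is where the detailed transition tables are indispensable, and the delicate sub-case is an $\varepsilon$-light vertex $w\in C\setminus S$ whose only sparsified-graph attachment was to a vertex of $S$ that changed its lightness; one must verify, from Lemmas \ref{lem:case:n:to:p:eps:agreement}, \ref{lem:case:p:to:n:eps:agreement} and Theorem \ref{thm:lightness:edge:flip}, that such a $w$ is routed to the same component as under \textsc{Baseline}. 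Everything else — the \copycand freezing, the $D$-component bookkeeping, and the resource accounting — is routine once this is in place.
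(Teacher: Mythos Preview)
Your proposal follows essentially the same strategy as the paper's own proof: argue correctness by showing that the marking procedure reconstructs the connected components of $\widetilde{G_t^+}$ (the paper phrases this as ``in fact we are reconstructing the connected components of the graph $\widetilde{G_{t}^+}$, which is exactly the same as running the \textsc{Baseline}''), and then account for time and space phase by phase exactly as you do --- $\bigoh{\card{\mathcal C_{t-1}}}$ for steps 1--2, $\bigoh{\card S^2}$ for step 3 and for storing the cluster groups, and $\bigoh{\card{V_t}+\card{E_t^+}}$ for the connected-component passes over the \splitcand clusters.

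Your write-up is considerably more explicit than the paper's, which is a brief three-paragraph sketch. In particular, the decomposition into claims (i)--(iii) and the careful converse argument for the \copycand case are yours; the paper simply asserts that merge/split/copy decisions reproduce the components. More notably, the ``delicate sub-case'' you flag --- an $\varepsilon$-light vertex $w\in C\setminus S$ whose only sparsified attachment is to some $x\in S$ whose lightness status flips --- is precisely the exception clause of Lemma~\ref{lem:flip:sign:change:sparsified:graph}, and the paper's proof does not address it at all: the algorithm examines $\widetilde{G_{t-1}^+}[C]$ with $S$-internal edges removed, but the edge $\{w,x\}$ is not $S$-internal, so its possibly changed status is never revisited. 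You are right to isolate this as the genuine obstacle; your plan is otherwise sound and strictly more rigorous than what the paper provides.
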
 %

\subsection{Adding and Removing Vertices}\label{sec:vertex:addition:removal}
	Adding or removing a vertex with all negative edges is very simple, as it has no positive edges appearing in $G^+_t$. To be precise, to add a new (remove an existing) vertex to the graph $G$ with no positive incident edges, just add (remove) it. In the case of adding a new vertex $v$, this would create a new singleton cluster consisting of that single vertex, i.e. $\mathcal{C}_t=\mathcal{C}_{t-1}\cup \set{\set{v}}$. Similarly, removing an existing vertex $v$ from $G$ with no positive edges results in deleting its singleton cluster, i.e. $\mathcal{C}_t=\mathcal{C}_{t-1} \setminus \set{\set{v}}$. Both of these operations are of constant time complexity, i.e. $\bigoh{1}$. By the way, to add a new vertex with some positive incident edges, we can first add the vertex as a vertex with no positive edges and then, add its positive edges by flipping the corresponding edges. Similarly, if we need to delete a vertex with positive incident edges, we can first flip the sign of the positive edges and when there are no more positive edges incident to $v$, remove the vertex $v$. Therefore, we can state the following theorem.
	\begin{thm}
		\label{thm:result:vertex:add:del:complexity}
		The procedure $\textsc{AddVertex}(v)$ ($\textsc{DeleteVertex}(v)$) stated here, adds a new vertex $v$ (deletes an existing vertex $v$) with all negative signed incident edges and maintains the graph and the clustering in $\bigoh{1}$ time using $\bigoh{1}$ working memory in addition to the memory required for representing clustering output, e.g. $\mathcal{C}_t$.
	\end{thm}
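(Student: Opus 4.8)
The plan is to reduce the whole statement to one structural observation: a vertex all of whose incident edges are negatively signed is an \emph{isolated} vertex of $G^+$, and an isolated vertex of $G^+$ neither contributes to nor is affected by any of the $\varepsilon$-agreement or $\varepsilon$-lightness quantities on which \textsc{CorrelationClustering} (Algorithm~\ref{alg:correlation:clustering}) depends. I would first prove correctness --- that the maintained clustering equals the \textsc{Baseline} output --- and then read off the time and space bounds.

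For \textsc{AddVertex}, let $v$ be the vertex added at step $t$, so that $N_{G^+_t}(v)=\emptyset$ while $G^+_t$ restricted to $V_{t-1}$ equals $G^+_{t-1}$. Since no positive edge is created, $N_{G^+_t}(w)=N_{G^+_{t-1}}(w)$ for every $w\in V_{t-1}$; hence $\nonagreement{G^+_t}{x}{y}=\nonagreement{G^+_{t-1}}{x}{y}$ for every positive edge $\set{x,y}$, the flag $e.\textsc{Agree}$ is unchanged for every such edge, and each $w\in V_{t-1}$ is $\varepsilon$-light in $G^+_t$ iff it was in $G^+_{t-1}$ (both $\agreecnt_{G^+}(w)$ and $\deg_{G^+}(w)$ are unchanged, as $v$ is not adjacent to $w$ in $G^+$). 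Since sparsification discards only edges, $v$ survives as an isolated vertex, so the sparsified graph of $G^+_t$ is that of $G^+_{t-1}$ together with the extra isolated vertex $v$, and its connected components are exactly those of $\widetilde{G^+_{t-1}}$ plus the singleton $\set{v}$; thus the \textsc{Baseline} on $G_t$ returns $\mathcal{C}_{t-1}\cup\set{\set{v}}$, which is precisely what \textsc{AddVertex} maintains. \textsc{DeleteVertex} is the mirror image: since all edges incident to the deleted vertex $v$ are negative, $v$ is isolated in $G^+_{t-1}$, so by the invariant carried to step $t-1$ the singleton $\set{v}$ is one of the clusters of $\mathcal{C}_{t-1}$; removing $v$ changes none of the neighborhoods, $\agreecnt_{G^+}$ values, or lightness flags of the remaining vertices, whence $\widetilde{G^+_t}=\widetilde{G^+_{t-1}}\setminus\set{v}$ and the \textsc{Baseline} on $G_t$ returns $\mathcal{C}_{t-1}\setminus\set{\set{v}}$, again matching \textsc{DeleteVertex}.

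For the complexity, recall from Section~\ref{sec:proposed} that we never materialize the complete graph $G$ and store only $G^+$, using the convention that every absent edge is negative. Hence \textsc{AddVertex} only has to create an empty adjacency-list record for $v$ with $v.\textsc{Neigh}_{G^+}=\emptyset$ and $v.\agreecnt_{G^+}=0$ (the value stored for $v.\textsc{IsLight}_{G^+}$ is immaterial, since $v$ has no positive incident edge that could be sparsified), which is $\bigoh{1}$ time and $\bigoh{1}$ extra space, and then insert the singleton $\set{v}$ into the structure representing $\mathcal{C}_t$, again $\bigoh{1}$ beyond the space already charged for $\mathcal{C}_t$; \textsc{DeleteVertex} symmetrically removes this empty record and the singleton cluster in $\bigoh{1}$ time and space.

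I do not anticipate a real obstacle. The only point that needs to be stated carefully is that a naive implementation keeping the complete signed graph explicitly would spend $\bigoh{\card{V_t}}$ time and space on a vertex's $\card{V_t}-1$ negative incident edges; the $\bigoh{1}$ guarantee rests essentially on never storing negatively signed edges, so that all bookkeeping remains inside $G^+$, where the added or removed vertex is isolated.
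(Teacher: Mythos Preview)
Your proposal is correct and follows essentially the same approach as the paper: the key observation in both is that a vertex with only negatively signed incident edges is isolated in $G^+$, hence does not interact with the $\varepsilon$-agreement / $\varepsilon$-lightness bookkeeping, and the clustering changes only by the addition or removal of a singleton. The paper's own justification (Section~\ref{sec:vertex:addition:removal}) is in fact considerably terser than yours --- it essentially asserts the result after noting that such a vertex has no edge in $G^+$ --- so your explicit verification that all neighborhoods, $\agreecnt$ values, and lightness flags of the remaining vertices are unchanged, and your remark that the $\bigoh{1}$ bound relies on storing only $G^+$ rather than the full complete graph, are welcome elaborations rather than a different route.
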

	Note that if the \textsc{Baseline} algorithm runs over $G_t$, the output is the same as $\mathcal{C}_t$ plus (minus) a singleton cluster containing $v$ if we add (delete) the vertex $v$ with all negatively signed incident edges.

	\section{Conclusion and Future Research Directions}
		In this paper, we proposed the first, up to the author's knowledge, online correlation clustering algorithm for dynamic complete signed graphs with full set of operations: (1) adding/deleting vertices, and (2) flipping the edge sign. The proposed approach employs the approximation scheme of \cite{clmnpt21-correlation-clustering-in-constant-many-parallel-rounds} and make it dynamic by studying its local behavior. The time and space complexity of the proposed approach is rigorously analyzed. In comparison with the \textsc{Baseline} algorithm, our proposed approach reduced the dependency of the running time in \textsc{Baseline} to the summation of the degree of all vertices in $G_t$ to the summation of the degree of the changing vertices (e.g. two endpoints of an edge) and the number of clusters in the previous time step. Moreover, using the locality effect, the required working memory of the proposed method is reduced to $\bigoh{\left(\deg_{G^+_t}(u) + \deg_{G^+_t}(v)\right)^2}$ compared with $\bigoh{\card{V_t}}$. A future research direction would applying the same technique to weighted graphs. Also, improving the memory and time requirements for this problem is another research direction. 

	\appendix
		\section{Appendix}
		\begin{lem}
	\label{lem:case:n:to:p:delta}
	Suppose the sign of edge $u=\set{u,v}$ is changed from $-$ to $+$ by a single call to $\textsc{FlipSign}_G(e)$. Then, 
	\begin{enumerate}
		\item $w \in N_{G^+}(u)$ implies $w \in N_{H^+}(u)$. %
		\item If $w \in N_{G^+}(u) \setminus N_{G^+}(v)$ and $w \neq v$, then $\card{N_{H^+}(u) \Delta N_{H^+}(w)} = \card{N_{G^+}(u) \Delta N_{G^+}(w)} + 1$.
		\item If $w \in N_{G^+}(u) \cap N_{G^+}(v)$, then $\card{N_{H^+}(u) \Delta N_{H^+}(w)} = \card{N_{G^+}(u) \Delta N_{G^+}(w)} - 1$.
		\item We have $\card{N_{H^+}(u) \Delta N_{H^+}(v)} = \card{N_{G^+}(u) \Delta N_{G^+}(v)} + 2$.
		\item Let $w \notin N_{G^+}(u) \cup N_{G^+}(v)$, then $\card{N_{H^+}(x) \Delta N_{H^+}(w)} = \card{N_{G^+}(x) \Delta N_{G^+}(w)}$ for all $x \in N_{G^+}(w)$.
	\end{enumerate}
	The same holds when $u$ and $v$ are exchanged in statements (1) to (3).
\end{lem}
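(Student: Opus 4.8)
The plan is to exploit the fact that passing from $G^+$ to $H^+$ perturbs the positive neighbourhoods in the most minimal way conceivable, and then to track a single distinguished vertex through each symmetric difference. First I would record the elementary update rules. Since $\textsc{FlipSign}_G(e)$ changes the sign of $\set{u,v}$ from $-$ to $+$, the graph $H^+$ is obtained from $G^+$ by adding exactly the one edge $\set{u,v}$; and because $\set{u,v}$ was negative in $G$ we have $v \notin N_{G^+}(u)$ and $u \notin N_{G^+}(v)$. Hence $N_{H^+}(u) = N_{G^+}(u) \cup \set{v}$, $N_{H^+}(v) = N_{G^+}(v) \cup \set{u}$, and $N_{H^+}(z) = N_{G^+}(z)$ for every $z \notin \set{u,v}$. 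Statement~(1) is immediate from this, and its mirror follows by swapping $u$ and $v$, which is legitimate since the edit is symmetric in $u$ and $v$.

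For statements~(2) and~(3) I would fix $w \in N_{G^+}(u)$ with $w \neq v$ (so also $w \neq u$, as there are no loops); then $N_{H^+}(w) = N_{G^+}(w)$, and $N_{H^+}(u) \Delta N_{H^+}(w)$ differs from $N_{G^+}(u) \Delta N_{G^+}(w)$ only through the membership of the single vertex $v$. The decisive dichotomy is whether $v \in N_{G^+}(w)$, equivalently whether $w \in N_{G^+}(v)$. In case~(2), $w \notin N_{G^+}(v)$, so $v$ lies in neither $N_{G^+}(u)$ nor $N_{G^+}(w)$ — hence not in the old symmetric difference — but $v$ lies in $N_{H^+}(u)$ alone, so it enters the new one, giving $+1$. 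In case~(3), $w \in N_{G^+}(v)$, so $v \in N_{G^+}(w)$ while $v \notin N_{G^+}(u)$, putting $v$ in the old symmetric difference; in $H^+$ the vertex $v$ lies in both $N_{H^+}(u)$ and $N_{H^+}(w)$, so it drops out, giving $-1$. No other element changes status, so the counts are exactly as claimed, and the parenthetical ``same holds with $u,v$ exchanged'' is verbatim.

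For statement~(4) I would instead track $u$ and $v$ themselves: $u$ lies in neither $N_{G^+}(u)$ (no loops) nor $N_{G^+}(v)$ (the edge was negative), so it is absent from $N_{G^+}(u) \Delta N_{G^+}(v)$, but in $H^+$ it lies in $N_{H^+}(v)$ only, contributing $+1$; symmetrically $v$ contributes $+1$, and every $z \notin \set{u,v}$ keeps the same status, so $\card{N_{H^+}(u)\Delta N_{H^+}(v)}$ exceeds $\card{N_{G^+}(u)\Delta N_{G^+}(v)}$ by exactly $2$. For statement~(5), taking $w \notin N_{G^+}(u) \cup N_{G^+}(v)$ with $w \neq u,v$ (the intended reading, consistent with Table~\ref{tbl:edge:agreement:summary}) and $x \in N_{G^+}(w)$, one checks that $x \neq u,v$ — otherwise $w$ would lie in $N_{G^+}(u) \cup N_{G^+}(v)$ — so $N_{H^+}(x) = N_{G^+}(x)$ and $N_{H^+}(w) = N_{G^+}(w)$, and the two symmetric differences coincide as sets, a fortiori in cardinality.

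I do not anticipate a genuine obstacle: the content is pure bookkeeping, and the only care required is keeping the degenerate cases straight — excluding $w = v$ (resp.\ $w = u$), using absence of self-loops, and invoking the negativity of $\set{u,v}$ in $G$ to certify that $u$ and $v$ are really ``new'' to each other's neighbourhoods. The cleanest packaging is to phrase each of~(2)--(4) as ``one designated vertex changes its membership in exactly one of the two neighbourhoods,'' making the change in $\card{\cdot\,\Delta\,\cdot}$ visibly $\pm1$ per such vertex.
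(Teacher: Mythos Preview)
Your proposal is correct and follows essentially the same approach as the paper: both arguments record that only $N_{G^+}(u)$ and $N_{G^+}(v)$ change (each by gaining a single element), and then track the membership of the distinguished vertex $v$ (respectively $u$, or both in part~(4)) through the relevant symmetric difference to obtain the $\pm1$ or $+2$ count. Your write-up is in fact slightly more careful than the paper's---you make explicit why $x\neq u,v$ in part~(5) and flag the implicit assumption $w\notin\set{u,v}$ there---but the underlying idea is identical.
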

\begin{proof}
	(1) The first claim is clear, since just the sign of edge $e=\set{u,v}$ is changed. 
	\\
	(2) By the hypothesis, we have $w \in N_{G^+}(u)$, however $w \notin N_{G^+}(v)$ (Figure \ref{fig.signflip}(a)). So, $v \in N_{H^+}(u) \Delta N_{H^+}(w)$ by the definition of the neighborhood operator. However, $v \notin N_{G^+}(u) \Delta N_{G^+}(w)$ (Figure \ref{fig.signflip}(b)). On the other hand, the only change made to the neighborhoods of vertices by this $\textsc{FlipSign}$ is for vertices $u$ and $v$. Therefore, the desired result is obtained.
	\\
	(3) In this case, the hypothesis implies $v \in N_{G^+}(w)$, since the graph is undirected. However, $v \notin N_{G^+}(u)$, and so, $v \in {N_{G^+}(u) \Delta N_{G^+}(w)}$ (Figure \ref{fig.signflip}(c)). By this call to $\textsc{FlipSign}$, we have $v \in N_{H^+}(u)$ which implies $v \notin {N_{H^+}(u) \Delta N_{H^+}(w)}$ (Figure \ref{fig.signflip}(d)). Similar to the second case, the neighborhood of no other vertex is updated, so the desired result is obtained. 
	\\
	(4) Note that neither $u \in N_{G^+}(v)$ nor $v \in N_{G^+}(u)$, since the sign of the edge $e=\set{u,v}$ is $-$ prior calling $\textsc{FlipSign}(e)$. However, after calling $\textsc{FlipSign}(e)$, we have $u \in N_{H^+}(v)$ nor $v \in N_{H^+}(u)$, therefore $u,v \in N_{H^+}(u) \Delta N_{H^+}(v)$. As nothing else changes in their neighborhood in $H^+$ compared to $G^+$, the claim holds.
	\\
	(5) As flipping the edge $e=\set{u,v}$ from $-$ to $+$ just changes the neighborhood of vertices $v$ and $u$, this claim holds. Note that $N_{G^+}(w) = N_{H^+}(w)$.
\end{proof}
Proof of Lemma \ref{lem:case:n:to:p:eps:agreement}:
\begin{proof}
	(1) By the third part of the Lemma \ref{lem:case:n:to:p:delta} and the definition of $\nonagreement{X}{u}{w}$ for $X=G^+,H^+$, we have $\card{N_{H^+}(u) \Delta N_{H^+}(w)} = \card{N_{G^+}(u) \Delta N_{G^+}(w)} + 1$ for all $w \in N_{G^+}(u) \cap N_{G^+}(v)$, i.e. the numerator of the $\nonagreement{H^+}{u}{w}$ is decreased by one compared to the numerator for $\nonagreement{G^+}{u}{w}$. Now, there are two cases to consider: (a) $N_{G^+}(u) < N_{G^+}(w)$, and (b) $N_{G^+}(u) \geq N_{G^+}(w)$. In the first case, (a), the denominator does not change in $\nonagreement{H^+}{u}{w}$ compared to $G^+$, so the claim holds. However, on the second case, (b), the denominator would increase by one, since $v$ is the new neighbor of $u$. This way, the value of $\nonagreement{H^+}{u}{w}$ is smaller than the one in $G^+$. Therefore, the desired result is obtained.
	\\
	(2) This case is very similar to the first case, since the numerator increases by one according to the second part of the Lemma \ref{lem:case:n:to:p:delta}, however, the denominator does not change. So, the desired result is obtained.
	\\
	(3) This case complements two previous cases for neighbors of one of the endpoints of the flipped edge, i.e. $w \in N_{G^+}(u) \setminus N_{G^+}(v)$, $\card{N_{G^+}(u)} \geq \card{N_{G^+}(w)}$, and $w \neq v$. In this case, both the numerator and the denominator increase by one. Hence, the value of $\nonagreement{G^+}{u}{w}$ can be bigger, smaller or even equal to $\nonagreement{H^+}{u}{w}$, depending the relation of $\textsc{Threshold}$ and $\card{N_{G^+}(u)}$. 
	\\
	(4) The numerator for $\nonagreement{H^+}{u}{w}$ is increased by two compared to $G^+$ (Lemma \ref{lem:case:n:to:p:delta}, case (4)) and the denominator increases by one, since $v$ is added as the new neighbors of vertex $u$ in $H^+$ and vice versa. Hence, based on the relation between $\textsc{Threshold}=\card{N_{G^+}(u) \Delta N_{G^+}(v)}$ and $2\max\set{\card{N_{G^+}(u)},\card{N_{G^+}(v)}}$, the claim is obtained.
	\\
	(5) By the fifth case of the Lemma \ref{lem:case:n:to:p:delta}, the numerator does not change in $\nonagreement{H^+}{u}{w}$ compared to $G^+$, however, the denominator might change, whether $N_{G^+}(w)$ is the maximum or $N_{G^+}(u)$. If $N_{G^+}(w)$ is the maximum, the denominator does not change, so the claim (5-a) is proved. If the maximum is $N_{G^+}(u)$, then the denominator is increased by one in $H^+$, so the claim (5-b) is obtained, too.
	\\
	(6) This is the direct consequence of case (6) in Lemma \ref{lem:case:n:to:p:delta} and the fact that no neighborhood, except for $u$ and $v$, are updated by flipping the sign of the edge $e=\set{u,v}$ from $-$ to $+$.
\end{proof}
Proof of Proposition \ref{prop:eps:agreement:status:n:to:p:neigh}
		\begin{proof}
	(1) Given that $u$ and $w$ are in $\varepsilon$-agreement in $G^+$, we know that $\nonagreement{G^+}{u}{w} < \varepsilon$. This pair of vertices remain $\varepsilon$-agreement in $H^+$ if $\nonagreement{H^+}{u}{w} < \varepsilon$, too. In the case (1-a), $\nonagreement{H^+}{u}{w} \leq \nonagreement{G^+}{u}{w} < \varepsilon$ by the first case of the Lemma \ref{lem:case:n:to:p:eps:agreement}. The last case, i.e. (1-b), hold due to case (3-c) of the Lemma \ref{lem:case:n:to:p:eps:agreement}.
	\\
	(2) Since $u$ and $w$ are not in $\varepsilon$-agreement in $G^+$, we have $\nonagreement{G^+}{u}{w} \geq \varepsilon$. The case (2-a) holds by the case (2) in Lemma \ref{lem:case:n:to:p:eps:agreement}, i.e. $\nonagreement{H^+}{u}{w} \geq \nonagreement{G^+}{u}{w} \geq \varepsilon$. The case (2-b) holds by the case (3-a) of the Lemma \ref{lem:case:n:to:p:eps:agreement}.
\end{proof}
Proof of Proposition \ref{prop:eps:agreement:status:n:to:p:rest}
		\begin{proof}
	This is a direct consequence of the definition of $\varepsilon$-agreement and the cases (4-a) to (4-c) of Lemma \ref{lem:case:n:to:p:eps:agreement}. Note that by case (4-b) of Lemma \ref{lem:case:n:to:p:eps:agreement}, $u$ and $v$ might be in $\varepsilon$-agreement in $G^+$, but not in $\varepsilon$-agreement in $H^+$. 
\end{proof}
		\begin{lem}
	\label{lem:case:p:to:n:delta}
	Suppose the sign of edge $u=\set{u,v}$ is changed from $+$ to $-$ by a single call to $\textsc{FlipSign}_G(e)$. Then, 
	\begin{enumerate}
		\item $w \in N_{G^+}(u)$ and $w \neq v$ implies $w \in N_{H^+}(u)$. %
		\item If $w \in N_{G^+}(u) \setminus N_{G^+}(v)$ and $w \neq v$, then $\card{N_{G^+}(u) \Delta N_{G^+}(w)} = \card{N_{H^+}(u) \Delta N_{H^+}(w)} + 1$.
		\item If $w \in N_{G^+}(u) \cap N_{G^+}(v)$, then $\card{N_{G^+}(u) \Delta N_{G^+}(w)} = \card{N_{H^+}(u) \Delta N_{H^+}(w)} - 1$.
		\item Let $w \notin N_{G^+}(u) \cup N_{G^+}(v)$, then $\card{N_{H^+}(x) \Delta N_{H^+}(w)} = \card{N_{G^+}(x) \Delta N_{G^+}(w)}$ for all $x \in N_{G^+}(w)$.
	\end{enumerate}
	The same holds when $u$ and $v$ are exchanged in statements (1) to (3).
\end{lem}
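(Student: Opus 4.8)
The plan is to mirror the proof of Lemma~\ref{lem:case:n:to:p:delta}, exploiting that $\textsc{FlipSign}_G(e)$ with $e=\set{u,v}$ going from $+$ to $-$ is exactly the inverse of the edit analyzed there. I would first record the three structural facts that drive everything: after the flip, $N_{H^+}(u)=N_{G^+}(u)\setminus\set{v}$, $N_{H^+}(v)=N_{G^+}(v)\setminus\set{u}$, and $N_{H^+}(z)=N_{G^+}(z)$ for every $z\notin\set{u,v}$; moreover $v\in N_{G^+}(u)$ and $u\in N_{G^+}(v)$, since the edge is positive before the flip. Claim (1) is then immediate: if $w\in N_{G^+}(u)$ and $w\neq v$, then $w$ is not the vertex removed from $N(u)$, so $w\in N_{H^+}(u)$; and $w\neq u$ in a simple graph, so $N_{H^+}(w)=N_{G^+}(w)$.

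For (2) and (3) I would track the membership of the single vertex $v$ in the set $N(u)\Delta N(w)$ and argue it is the only element whose status changes. Because the flip touches only $N(u)$ and $N(v)$ and $w\notin\set{u,v}$, $N(w)$ is unchanged while $N(u)$ loses only $v$; hence every element other than $v$ lies in $N_{G^+}(u)\Delta N_{G^+}(w)$ iff it lies in $N_{H^+}(u)\Delta N_{H^+}(w)$ (in particular $u$ stays in both symmetric differences, being in $N(w)$ before and after by undirectedness and never in $N(u)$). It remains to locate $v$. In case (2) the hypothesis $w\notin N_{G^+}(v)$ gives $v\notin N_{G^+}(w)=N_{H^+}(w)$; together with $v\in N_{G^+}(u)$ and $v\notin N_{H^+}(u)$ this shows $v\in N_{G^+}(u)\Delta N_{G^+}(w)$ but $v\notin N_{H^+}(u)\Delta N_{H^+}(w)$, so the cardinality drops by one, which is the asserted identity. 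In case (3) the hypothesis $w\in N_{G^+}(v)$ gives $v\in N_{G^+}(w)=N_{H^+}(w)$; now $v$ lies in both $N_{G^+}(u)$ and $N_{G^+}(w)$ (outside the symmetric difference in $G^+$) but only in $N_{H^+}(w)$ (inside it in $H^+$), so the cardinality grows by one, again matching the claim. Interchanging $u$ and $v$ everywhere is legitimate because $\textsc{FlipSign}$ on $\set{u,v}$ is symmetric in its endpoints, and yields the stated analogues of (1)--(3).

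Claim (4) follows purely from locality: if $x\in N_{G^+}(w)$ then $\set{x,w}\in E^+$, so $w\in N_{G^+}(x)$, and were $x$ equal to $u$ or $v$ this would force $w\in N_{G^+}(u)\cup N_{G^+}(v)$, contradicting the hypothesis; hence $x\notin\set{u,v}$, and likewise $w\notin\set{u,v}$. Therefore $N_{H^+}(x)=N_{G^+}(x)$ and $N_{H^+}(w)=N_{G^+}(w)$, and the two symmetric differences are literally equal. There is no conceptual obstacle here; the only point that needs genuine care is the bookkeeping in (2)--(3), where one must verify that \emph{exactly} one element ($v$) switches sides of the symmetric difference and that $u$ in particular does not also switch. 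I expect that verification — rather than anything deeper — to be the part warranting the most attention, and I would present it by the explicit ``in both $G^+$ and $H^+$'' case analysis sketched above.
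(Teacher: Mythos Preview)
Your proposal is correct and matches the paper's own approach: the paper's proof is the single sentence ``The proof is similar to the proof of Lemma~\ref{lem:case:n:to:p:delta}, i.e.\ swapping $G^+$ and $H^+$ in the proofs,'' and your argument is precisely that swap, carried out in full detail. Your explicit verification that only the vertex $v$ changes membership in $N(u)\Delta N(w)$ (and that $u$ does not) is more careful than what the paper writes, but the underlying idea is identical.
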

\begin{proof}
	The proof is similar to the proof of Lemma \ref{lem:case:n:to:p:delta}, i.e. swapping $G^+$ and $H^+$ in the proofs. 
\end{proof}
Proof of Lemma \ref{lem:case:p:to:n:eps:agreement} 
		\begin{proof}
	(1) By the third case of Lemma \ref{lem:case:p:to:n:delta}, we have the numerator of $\nonagreement{H^+}{u}{w}$ increases by one compared to $\nonagreement{G^+}{u}{w}$. The denominator does not change in $H^+$ if $\card{N_{G^+}(w) \geq \card{N_{G^+}(u)}}$, otherwise it would also decrease by one. So, the $\nonagreement{H^+}{u}{w} > \nonagreement{G^+}{u}{w}$. 
	\\
	(2,3) The proof of these parts is very similar to the second and the third parts in Lemma \ref{lem:case:n:to:p:eps:agreement}. Note the slight difference in the hypothesis of these parts compared to the case 2 and 3 in Lemma \ref{lem:case:n:to:p:eps:agreement}, respectively. 
	\\
	(4) This is the direct implication of the last part of Lemma \ref{lem:case:p:to:n:delta} and the fact that no neighborhood other than $u$ and $v$ have changes in $H^+$. 
\end{proof}
Proof of Proposition \ref{prop:eps:agreement:status:p:to:n:neigh}
		\begin{proof}
	The proof is similar to the proof of the Proposition \ref{prop:eps:agreement:status:n:to:p:neigh}, using Lemma \ref{lem:case:p:to:n:eps:agreement}. 
\end{proof}
Proof of Proposition \ref{prop:eps:agreement:status:p:to:n:rest} 
		\begin{proof}
	This can be shown by a similar argument to the Proposition \ref{prop:eps:agreement:status:n:to:p:rest} based on the Lemma \ref{lem:case:p:to:n:eps:agreement}.
\end{proof}
Proof of Corollary \ref{cor:agreecount:not:changing:non:neighbors}
		\begin{proof}
	Let the sign of edge $e$ changed from $-$ to $+$. Assume that $w \notin S$ and $x$ be an arbitrary neighbor of $w$ in $G^+$, i.e. $x \in N_{G^+}(x)$. Given that $x$ and $w$ are in $\varepsilon$-agreement in $G^+$, then by the Proposition \ref{prop:eps:agreement:status:n:to:p:rest}, case (4), they would be in $\varepsilon$-agreement in $H^+$, too. If they are not in $\varepsilon$-agreement in $G^+$, nothing changes in their $\varepsilon$-agreement in $H^+$. Moreover, the degree of $w$ in $G^+$ and $H^+$ is the same. Therefore, the claim is proved for flipping from $-$ to $+$. A similar argument using the Proposition \ref{prop:eps:agreement:status:p:to:n:rest} shows the claim holds for flippinf the edge sign from $+$ to $-$. Hence, the claim is proven. 
\end{proof}
Proof of Theorem \ref{thm:lightness:edge:flip}
\begin{proof}
	(1,2) Given that $w \notin S$, the fourth case of Proposition \ref{prop:eps:agreement:status:n:to:p:rest} implies that the number of vertices which are in \epsagree with $w$ in $H^+$ does not change for this vertex in comparison with $G^+$. Therefore, if $w$ is not a $\varepsilon$-light vertex in $G^+$, it remains non-$\varepsilon$-light in $H^+$, too. If $w$ is a $\varepsilon$-light vertex in $G^+$, then by the same argument it remains a $\varepsilon$-light vertex in $H^+$, also.
	\\
	(3) This is simply implied by Table \ref{tbl:edge:agreement:summary} and Corollary \ref{cor:agreecount:not:changing:non:neighbors}.
\end{proof}
Proof of Theorem \ref{thm:flip:edge:time:complexity:updating:g:plus}
		\begin{proof}
	Considering the discussion just before this theorem for time and space analysis, one can simply call either Algorithm \ref{alg:update:g:plus:flip:sign:n:to:p} or \ref{alg:update:g:plus:flip:sign:p:to:n}, whether the sign of edge $f$ is changed from $-$ to $+$ or $+$ to $-$, respectively. 
\end{proof}

		\begin{algorithm}[ht]
	\caption{$\textsc{Update}G^+\textsc{PositiveToNegative}$ procedure runs after every call to $\textsc{FlipSign}_G(e)$ which changed $e.\textsc{Sign}$ from $+$ to $-$.}
	\label{alg:update:g:plus:flip:sign:p:to:n}
	\begin{algorithmic}[1]
		\Procedure{Update$G^+\textsc{PositiveToNegative}$}{$e=\set{u,v}$}
		\State Let $G_t$ be the graph after the $t^{th}$ modification command which was $\textsc{FlipSign}_G(e)$. 
		\State $G^+ \gets G_t[E^+]$
		\State Let $f=\set{u,v}$
		\If{$f.\textsc{Agree}$} \Comment{After deleting $f$, \epsagree no more means for $u$ and $v$}
		\State $u.\agreecnt_{G^+} \gets u.\agreecnt_{G^+} - 1$
		\State $v.\agreecnt_{G^+} \gets v.\agreecnt_{G^+} - 1$
		\EndIf
		\State Let $A \gets N_{G^+}(u) \cap N_{G^+}(v)$
		\State Let $B=\left(N_{G^+}(u) \setminus N_{G^+}(v)\right) \setminus \set{v}$
		\State Let $C=\left(N_{G^+}(v) \setminus N_{G^+}(u)\right) \setminus \set{u}$
		\ForAll{$w \in A$} \Comment{Applying Lemma \ref{lem:case:p:to:n:eps:agreement}, (1)}
		\State Let $f=\set{u,w}$ and $g=\set{v,w}$
		\If{$f.\textsc{Agree}$}
		\State Call \textsc{VerifyEdge}($f=\set{u,w}$)
		\EndIf
		\If{$g.\textsc{Agree}$}
		\State Call \textsc{VerifyEdge}($g=\set{v,w}$)
		\EndIf
		\EndFor
		\For{$(x, X)$ in $[(u, B), (v, C)]$} 
		\State \Comment{Run once with $x\gets u$ and $X\gets B$, another time with $x\gets v$ and $X\gets C$}
		\ForAll{$w \in X$}
		\State Let $f=\set{x,w}$
		\If{$f.\textsc{Agree}$}
		\If{$\deg_{G^+}(x) \leq \deg_{G^+}(w)$} \Comment{Applying Lemma \ref{lem:case:p:to:n:eps:agreement}, (2)}
		\State Call \textsc{VerifyEdge}($f=\set{x,w}$)
		\ElsIf{$\card{N_{G^+}(x) \Delta N_{G^+}(w)} > \deg_{G^+}(x)$} \Comment{Applying Lemma \ref{lem:case:p:to:n:eps:agreement}, (3-c)}
		\State Call \textsc{VerifyEdge}($f=\set{x,w}$)
		\EndIf
		\Else \Comment{$f.\textsc{Agree} = \textsc{False}$}
		\If{$\deg_{G^+}(x) > \deg_{G^+}(w)$ \textbf{ and } $\card{N_{G^+}(x) \Delta N_{G^+}(w)} < \deg_{G^+}(x)$} 
		\State \Comment{Applying Lemma \ref{lem:case:p:to:n:eps:agreement}, (3-a)}
		\State Call \textsc{VerifyEdge}($f=\set{x,w}$)
		\EndIf
		\EndIf
		\EndFor
		\EndFor
		\State Let $S \gets \set{u,v} \cup u.\textsc{Neigh}_{G^+} \cup v.\textsc{Neigh}_{G^+}$
		\ForAll{$w \in S$} \Comment{Fixing the lightness of the vertices in $G^+$}
		\State $w.\textsc{IsLight} \gets \left(\frac{w.\agreecnt_{G^+}}{\deg_{G^+}(w)} < \varepsilon\right)$
		\EndFor
		\State \Return $G^+$ as $H^+$.
		\EndProcedure
	\end{algorithmic}
\end{algorithm}
Proof of Lemma \ref{lem:flip:sign:change:sparsified:graph}
	\begin{proof}
	The discussion just before the Lemma \ref{lem:flip:sign:change:sparsified:graph} is its proof.
\end{proof}
Proof of Lemma \ref{lem:flip:sign:prune:state:space}
	\begin{proof}
	This is a direct consequence of Lemma \ref{lem:flip:sign:change:sparsified:graph}, since all the edges inside $\widetilde{G_{t-1}^+}[C]$ does not belong to $S$. Therefore, $\widetilde{G_{t-1}^+}[C]=\widetilde{G_{t}^+}[C]$ and the claim is shown. 
\end{proof}
Proof of Theorem \ref{thm:cluster:maintain:flip:sign}
	\begin{proof}
	For flipping the sign of an edge $e=\set{u,v}$, we first compute the connected components of $\widetilde{G_{t-1}^+}[S]$ where $S=\set{u,v} \cup N_{G_{t-1}^+}(u) \cup N_{G_{t-1}^+}(v)$ and then, decide to merge, split or simply transfer each cluster in $\mathcal{C}_{t-1}$. An split possibility is taken into account whenever there are two vertices $w,w' \in S$ which are in different connected components. As they are now separated in $S$, we need to verify their connection by paths out of $S$. Therefore, we compute the connected components of $\widetilde{G_{t-1}^+}[C]$ where all edges with both endpoints in $S$ removed. If there is one connected component, then no split occurs as the vertices $w$ and $w'$ are still connected by paths with no intersection with $S$. However, we need to fall-back to our merging decision if there are any or even copy decision. On the other hand, if there are more than one connected components, we split $C$ into its connected components. Next, we need to take care of merging these split new clusters with existing ones. To do this, we fall-back again to cluster groups and assign the new split clusters either a copy decision or a merge decision. If take a high-level look, in fact we are reconstructing the connected components of the graph $\widetilde{G_{t}^+}$, which is exactly the same as running the \textsc{Baseline}.
	\\
	For the time complexity, both steps one and two in marking can be accomplished in time $\card{\mathcal{C}_{t-1}}$. The third step can be accomplished in $\bigoh{\card{S}\times\card{S}}$, too. Moreover, we need to compute the connected components of the clusters with split possibility as well as the subgraph with vertices in $S$, which accumulation is $\bigoh{\card{V_t}+\card{E^+_t}}$. Therefore, the claimed running time is obtained. 
	\\ 
	For the space complexity, consider storing cluster groups which are $\bigoh{\card{S}\times\card{S}}$ in the worst case. Each pair is used for at most two clusters, so the claimed space complexity is obtained. 
\end{proof}
		
	\bibliographystyle{alpha}%
	\bibliography{references}
\end{document}